\DeclareMathAlphabet{\mathbbold}{U}{bbold}{m}{n}
\renewcommand{\backref}[1]{}
\renewcommand{\backrefalt}[4]{%
\ifcase #1 %
\or
[p.\ #2]%
\else
[pp.\ #2]%
\fi}
\theoremstyle{plain}
\newtheorem{theorem}{Theorem}
\newtheorem{proposition}[theorem]{Proposition}
\newtheorem{lemma}[theorem]{Lemma}
\newtheorem{conjecture}{Conjecture}
\newtheorem{definition}[theorem]{Definition}
\theoremstyle{remark}
\theoremstyle{plain}
\def\Reals{{\mathbb{R}}} 
\renewcommand{\Pr}{\mathop{\bf Pr\/}}
\newcommand{\E}{\mathop{\bf E\/}}
\newcommand{\OR}{\textsc{or}}
\newcommand{\AND}{\textsc{and}}
\newcommand{\Parity}{\textsc{parity}}
\newcommand{\eps}{\varepsilon}
\renewcommand{\hat}{\widehat}
\renewcommand{\tilde}{\widetilde}
\newcommand{\B}{\{0,1\}}
\let\OldLambda\lambda
\let\lambda\relax
\DeclareMathOperator{\lambda}{\OldLambda}
\DeclareMathOperator{\D}{\mathsf{D}}
\DeclareMathOperator{\C}{\mathsf{C}}
\DeclareMathOperator{\R}{\mathsf{R}}
\DeclareMathOperator{\Q}{\mathsf{Q}}
\DeclareMathOperator{\RC}{\mathsf{RC}}
\DeclareMathOperator{\s}{\mathsf{s}}
\DeclareMathOperator{\bs}{\mathsf{bs}}
\DeclareMathOperator{\adeg}{\mathsf{\widetilde{deg}}}
\let\deg\relax
\DeclareMathOperator{\deg}{\mathsf{deg}}
\DeclareMathOperator{\Adv}{\mathsf{Adv}}
\DeclareMathOperator{\SA}{\mathsf{SA}}
\DeclareMathOperator{\MM}{\mathsf{MM}}
\DeclareMathOperator{\SWA}{\mathsf{SWA}}
\DeclareMathOperator{\GSA}{\mathsf{GSA}}
\DeclareMathOperator{\K}{\mathsf{K}}
\DeclareMathOperator{\Dom}{Dom}
\DeclareMathOperator{\tr}{tr}
\newcommand{\norm}[1]{\left\lVert#1\right\rVert}
\definecolor{conj}{HTML}{C2C0BF}
\definecolor{open}{HTML}{A31F34} 
\newcommand{\ct}[2]{%
\cellcolor{white}%
\begin{tabular}[t]{@{}c@{}}#1\\[-5pt]#2\end{tabular}%
}
\newcommand{\co}[2]{%
\cellcolor{conj!70}%
\begin{tabular}[t]{@{}c@{}}#1\\[-5pt]#2\end{tabular}%
}
\newcommand{\cc}[2]{%
\cellcolor{conj!70}%
\begin{tabular}[t]{@{}c@{}}#1\\[-5pt]#2\end{tabular}%
}
\newcommand{\smcite}[1]{{\scriptsize \cite{#1}}}
\newcommand{\newbound}[1]{\colorbox{open!50}{#1}}
\newcommand{\Huangbound}[1]{\colorbox{green!50}{#1}}
\title{Quantum Implications of Huang's Sensitivity Theorem}
\author{%
Scott Aaronson\footnote{Department of Computer Science, University of Texas at Austin. \texttt{aaronson@cs.utexas.edu}} 
\and 
Shalev Ben-David\footnote{University of Waterloo.
\texttt{shalev.b@uwaterloo.ca}} 
\and 
Robin Kothari\footnote{Microsoft Quantum and Microsoft Research. \texttt{robin.kothari@microsoft.com}}
\and 
Avishay Tal\footnote{%
Department of Electrical Engineering and Computer Sciences, University of California at Berkeley. \texttt{atal@berkeley.edu}}
}
\begin{document}
\maketitle
\begin{abstract}
Based on the recent breakthrough of Huang (2019), we show that for any total Boolean function $f$, the deterministic query complexity, $\D(f)$, is at most quartic in the quantum query complexity, $\Q(f)$: $\D(f) = O(\Q(f)^4)$. This matches the known separation (up to log factors) due to Ambainis, Balodis, Belovs, Lee, Santha, and Smotrovs (2017).
We also use the result to resolve the quantum analogue of the Aanderaa--Karp--Rosenberg conjecture. We show that if $f$ is a nontrivial monotone graph property of an $n$-vertex graph specified by its adjacency matrix, then $\Q(f)=\Omega(n)$, which is also optimal.
\end{abstract}

\section{Introduction}
Last year, Huang resolved a major open problem in the analysis of Boolean functions called the \emph{sensitivity conjecture}~\cite{Huang2019}, which was open for nearly 30 years \cite{NS94}. Surprisingly, Huang's elegant proof takes less than 2 pages---truly a ``proof from the book.'' 
Specifically, Huang showed that for any total Boolean function, which is a function $f:\B^n\to\B$, we have
\begin{equation}\label{eq:Huang}
    \deg(f) \leq \s(f)^2,
\end{equation}
where $\deg(f)$ is the real degree of $f$ and $\s(f)$ is the (maximum) sensitivity of $f$. These measures and other measures appearing in this introduction are defined in \Cref{sec:prelim}.

In this note, we describe some implications of Huang's resolution of the sensitivity conjecture to quantum query complexity.  We observe that Huang actually proves a stronger claim, in which $\s(f)$ in \cref{eq:Huang} can be replaced by $\lambda(f)$, a spectral relaxation of sensitivity that we define later. This observation has several implications for quantum query complexity.

We use this observation to settle the optimal relation between the deterministic query complexity, $\D(f)$, and quantum query complexity, $\Q(f)$, for total functions.
We know from the seminal results of Nisan~\cite{Nisan91}, Nisan and Szegedy~\cite{NS94} and Beals et al.\ \cite{BBCMW01} that any total Boolean function $f$ satisfies%
\footnote{%
This means that for total functions, quantum query algorithms can only outperform classical query algorithms by a polynomial factor. On the other hand, for partial functions, which are defined on a subset of $\B^n$, exponential and even larger speedups are possible.}
\begin{equation}\label{eq:Beals}
    \D(f)=O(\Q(f)^6).
\end{equation}
Grover's algorithm \cite{Grover96} shows that for the $\OR$ function, a quadratic separation between $\D$ and $\Q$ is possible. This was the best known quantum speedup for total functions until the work of Ambainis et al.\ \cite{ABB+15}, who constructed a total function $f$ with 
\begin{equation}\label{eq:ABB}
\D(f) = \tilde{\Omega}(\Q(f)^{4}).    
\end{equation}

In this note, we show that the quartic separation (up to log factors) in \cref{eq:ABB} is actually the best possible:
\begin{theorem}\label{thm:D vs. Q}
For all Boolean functions $f:\B^n \to \B$, we have 
$\D(f) = O(\Q(f)^4)$. 	
\end{theorem}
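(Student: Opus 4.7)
The plan is to compose three polynomial relations, each losing a factor of two in the exponent, to chain $\D(f)$ back down to $\Q(f)$. First, I would verify that Huang's original proof in fact establishes the stronger inequality $\deg(f) \leq \lambda(f)^2$, where $\lambda(f) \leq \s(f)$ is the spectral relaxation of sensitivity alluded to in the introduction; this is really an observation about what Huang's Cauchy-interlacing argument proves, since it bounds $\deg(f)$ by the spectral radius of a signed subgraph of the hypercube, and upper-bounding that radius by $\s(f)$ is a loose secondary step that can be skipped.

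Next, I would establish the quantum lower bound $\Q(f) = \Omega(\lambda(f))$. Because $\lambda(f)$ is a spectral norm of a matrix encoding the local variation of $f$, the eigenvector realizing $\lambda(f)$ should plug directly into Ambainis's positive-weights adversary method, certifying $\Adv(f) = \Omega(\lambda(f))$ and hence $\Q(f) = \Omega(\lambda(f))$. Composing with the first step yields the clean quadratic relation $\deg(f) = O(\Q(f)^2)$, which is a genuine strengthening of what one could obtain by chaining $\deg(f) \leq \s(f)^2$ (Huang) with $\s(f) \leq \bs(f) = O(\Q(f)^2)$ (approximate degree), the latter only giving $\deg(f) = O(\Q(f)^4)$.

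Finally, to convert $\deg(f) = O(\Q(f)^2)$ into $\D(f) = O(\Q(f)^4)$ I need a classical relation of the form $\D(f) = O(\deg(f)^2)$. This last step is the main obstacle: the best unconditional bound in the literature is Midrijanis's $\D(f) = O(\deg(f)^3)$, which would only recover the sextic bound of Beals et al. I expect the missing factor of $\deg(f)$ to come from a second application of Huang, or of its proof technique, inside the classical argument---for instance by sharpening the Nisan--Szegedy bound $\bs(f) = O(\deg(f)^2)$ using the new inequality $\s(f) \geq \sqrt{\deg(f)}$, or by proving $\D(f) = O(\bs(f)^2)$ directly and then applying the standard bound $\bs(f) = O(\Q(f)^2)$ from approximate degree. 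Pinning down this final classical step is where I expect the bulk of the technical work to lie.
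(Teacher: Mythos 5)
Your first three steps exactly match the paper: Huang's argument proves $\deg(f) \leq \lambda(f)^2$ where $\lambda(f) = \norm{A_f}$ is the spectral norm of the sensitivity graph, plugging $A_f$ into the spectral adversary method gives $\Q(f) = \Omega(\lambda(f))$, and combining yields $\deg(f) = O(\Q(f)^2)$. That is precisely the paper's \Cref{thm:deg vs. Q}.

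The gap is in your final step. You set yourself the goal of proving $\D(f) = O(\deg(f)^2)$, but that relation is \emph{open} (the best known is $\D(f) = O(\deg(f)^3)$, as you note, and $\D(f) = O(\bs(f)^2)$ is also an open conjecture stated in \Cref{sec:open}). You do not need either. Midrijanis's result \cite{Mid04} is not merely $\D(f) = O(\deg(f)^3)$; the actual statement is the multiplicative bound
\begin{equation*}
\D(f) \leq \bs(f)\deg(f),
\end{equation*}
and the weaker cubic form is obtained by further bounding $\bs(f) = O(\deg(f)^2)$ via Nisan--Szegedy. The point is to \emph{not} take that last loss: instead, bound the two factors by $\Q(f)$ independently. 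You already observed that $\bs(f) = O(\Q(f)^2)$ (via approximate degree, or directly via the BBBV hybrid argument), and the new result gives $\deg(f) = O(\Q(f)^2)$. Multiplying these two and substituting into Midrijanis's inequality immediately gives $\D(f) \leq \bs(f)\deg(f) = O(\Q(f)^4)$. No second application of Huang's technique, no sharpening of Nisan--Szegedy, and no new classical lemma is required; both ingredients were already in your hands, and the only missing idea was to use Midrijanis's bound in its original multiplicative form so that the two $O(\Q(f)^2)$ bounds can be applied to separate factors.
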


We deduce \Cref{thm:D vs. Q} as a corollary of a new tight quadratic relationship between $\deg(f)$ and $\Q(f)$:

\begin{theorem}\label{thm:deg vs. Q}
For all Boolean functions $f:\B^n \to \B$, we have 
$\deg(f) = O(\Q(f)^2)$. 	
\end{theorem}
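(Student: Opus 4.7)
The plan is to route through an intermediate quantity, the \emph{spectral sensitivity} $\lambda(f)$, defined as the largest spectral norm of any symmetric $\pm 1$ signing of the adjacency matrix of the sensitivity graph of $f$ (the graph on $\B^n$ whose edges are the pairs $\{x, x^{\oplus i}\}$ with $f(x) \neq f(x^{\oplus i})$). I will establish two inequalities: (a) $\deg(f) \leq \lambda(f)^2$, and (b) $\lambda(f) = O(\Q(f))$; multiplying them yields the theorem.

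Inequality (a) is the refined form of Huang's theorem pointed to in the introduction. Huang's proof uses his signed adjacency matrix of the hypercube (with spectrum $\pm \sqrt{n}$) together with Cauchy interlacing to lower-bound the spectral radius of the restriction to any set of more than $2^{n-1}$ vertices by $\sqrt{n}$; only at the very end is this strengthened to a max-degree bound via the crude estimate $\lambda \leq \s$. Keeping the spectral quantity throughout (and invoking the Gotsman--Linial reduction exactly as Huang does) gives $\sqrt{\deg(f)} \leq \lambda(f)$.

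Inequality (b) I plan to prove via the general (negative-weights) adversary bound $\Adv(f)$, which satisfies $\Q(f) = \Theta(\Adv(f))$. Given a symmetric $\pm 1$ signing $A$ of the sensitivity graph attaining $\|A\| = \lambda(f)$, I claim $A$ is itself a valid adversary matrix witnessing $\Adv(f) \geq \lambda(f)$: it is symmetric; its support consists only of pairs $(x, y)$ with $f(x) \neq f(y)$ (by definition of the sensitivity graph); and for each coordinate $i \in [n]$, restricting $A$ to pairs $(x, y)$ with $y = x^{\oplus i}$ yields a $\pm 1$-signed matching on $\B^n$, hence a matrix of spectral norm at most $1$. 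The adversary ratio is therefore at least $\lambda(f)$, so $\Adv(f) \geq \lambda(f)$ and thus $\lambda(f) = O(\Q(f))$.

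The only structural observation beyond off-the-shelf tools is that the coordinate-$i$ edges of the cube form a matching, so that the coordinate-$i$ restriction of any $\pm 1$ signing has spectral norm at most $1$; this is the step I expect to be the main conceptual hinge, in that it is what lets the naive sensitivity-graph signing serve directly as an adversary matrix. Once that is in hand, chaining the two bounds gives $\deg(f) \leq \lambda(f)^2 = O(\Q(f)^2)$, completing the proof.
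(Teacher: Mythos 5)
Your proposal is correct and follows essentially the same route as the paper: establish $\deg(f) \leq \lambda(f)^2$ by Huang's spectral argument and $\lambda(f) = O(\Q(f))$ by an adversary bound, then chain them. Two small remarks on the surface differences. First, you define $\lambda(f)$ as the maximum spectral norm over symmetric $\pm 1$ signings of the sensitivity graph, whereas the paper takes $\lambda(f) = \|A_f\|$ for the unsigned $\{0,1\}$ adjacency matrix; these coincide, because for a nonnegative matrix the all-$+1$ signing always attains the largest spectral norm (replace the extremal singular vectors by their entrywise absolute values and apply the triangle inequality), a fact the paper records when proving $\Adv_1^\pm = \Adv_1$ in \Cref{thm:equivalent}. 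Second, you route through the tight negative-weight adversary $\Adv^\pm(f) = \Theta(\Q(f))$ to accommodate a possibly signed $\Gamma$; since the optimizing signing is the nonnegative $A_f$ itself, the older positive spectral adversary of Barnum--Saks--Szegedy already suffices, and that is what the paper uses in \Cref{lemma:Q vs rho}. Your identification of the crux --- that each coordinate-$i$ slice $\Gamma \circ D_i$ is a signed submatching of the hypercube and hence has spectral norm at most $1$ --- is exactly the observation on which the paper's \Cref{lemma:Q vs rho} rests.
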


Observe that \Cref{thm:deg vs. Q} is tight for the $\OR$ function on $n$ variables, whose degree is $n$ and whose quantum query complexity is $\Theta(\sqrt{n})$ \cite{Grover96,BBBV97}. Prior to this work, the best relation between $\deg(f)$ and $\Q(f)$ was a sixth power relation, $\deg(f) = O(\Q(f)^6)$, which follows from \cref{eq:Beals}.

As discussed earlier, our proof relies on the restatement of Huang's result (\Cref{thm:Huang}), showing that $\deg(f)\leq \lambda(f)^2$, where $\lambda(f)$ is the spectral relaxation of sensitivity defined in \Cref{sec:mainproof}. We then show that the measure $\lambda(f)$ lower bounds the original quantum adversary method of Ambainis~\cite{Amb02}, which in turn lower bounds $\Q(f)$.

We now show how \Cref{thm:D vs. Q} straightforwardly follows from \Cref{thm:deg vs. Q} using two previously known connections between complexity measures of Boolean functions.

\begin{proof}[Proof of \Cref{thm:D vs. Q} assuming \Cref{thm:deg vs. Q}]
In \cite{Mid04}, Midrijanis showed that for all total functions $f$, we have
\begin{equation}\label{eq:Mid}
\D(f) \leq \bs(f) \deg(f),
\end{equation} 
where $\bs(f)$ is the block sensitivity of $f$.
	
\Cref{thm:deg vs. Q} shows that $\deg(f) = O(\Q(f)^2)$. Combining the relationship between block sensitivity and approximate degree from \cite{NS94} with the results of \cite{BBCMW01}, we get that $\bs(f) =  O(\Q(f)^2)$. (This can also be proved directly using the lower bound method in \cite{BBBV97}.)

Combining these three inequalities yields $\D(f) = O(\Q(f)^4)$ for all total Boolean functions $f$.
\end{proof}

It remains to show the main result, \Cref{thm:deg vs. Q}, which we do in \Cref{sec:mainproof} using the proof of the sensitivity conjecture by Huang~\cite{Huang2019} and the spectral adversary method in quantum query complexity \cite{BSS03}.

In \Cref{sec:AKR}, we also use \Cref{thm:deg vs. Q} to prove the quantum analogue of the famous \textit{Aanderaa--Karp--Rosenberg conjecture}. Briefly, this conjecture is about the minimum possible query complexity of a nontrivial monotone graph property, for graphs specified by their adjacency matrices.

There are variants of the conjecture for different models of computation. For example, the randomized variant of the Aanderaa--Karp--Rosenberg conjecture, attributed to Karp~\cite[Conjecture 1.2]{SW86} and Yao~\cite[Remark (2)]{Yao77}, states that for all nontrivial monotone graph properties $f$, we have $\R(f) = \Omega(n^{2})$. Following a long line of work, the current best lower bound is $\R(f) = \Omega(n^{4/3} \log^{1/3}n)$ due to Chakrabarti and Khot~\cite{CK01}.

The quantum version of the conjecture was raised by Buhrman, Cleve, de Wolf, and Zalka~\cite{BCdWZ99}, who observed that the best one could hope for is $\Q(f) = \Omega(n)$, because the nontrivial monotone graph property ``contains at least one edge'' can be decided with $O(n)$ queries using Grover's algorithm~\cite{Grover96}.
Buhrman et al.~\cite{BCdWZ99} also showed that all nontrivial monotone graph properties $f$ satisfy $\Q(f)=\Omega(\sqrt{n})$. The current best lower bound is $\Q(f) = \Omega(n^{2/3}\log^{1/6}n)$, which was credited to Yao in \cite{MSS07}.
We resolve this conjecture by showing an optimal $\Omega(n)$ lower bound.

\begin{theorem}\label{thm:qAKR}
Let $f:\B^{\binom{n}{2}} \to \B$ be a nontrivial monotone graph property. Then $\Q(f)=\Omega(n)$.
\end{theorem}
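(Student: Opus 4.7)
The strategy is to reduce Theorem~\ref{thm:qAKR} to a degree lower bound. By Theorem~\ref{thm:deg vs. Q}, $\Q(f) = \Omega(\sqrt{\deg(f)})$, so it suffices to show that every nontrivial monotone graph property on $n$ vertices satisfies $\deg(f) = \Omega(n^2)$. Plugging this into Theorem~\ref{thm:deg vs. Q} immediately yields $\Q(f) = \Omega(n)$, matching the $O(n)$ upper bound from Grover's algorithm applied to the property ``contains at least one edge.''

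To establish the degree lower bound, I would exploit the $S_n$-invariance of $f$: vertex relabelings permute the $\binom{n}{2}$ edge variables and leave $f$ unchanged, so the multilinear polynomial $p$ representing $f$ decomposes as $p = \sum_H c_H P_H$, where each $P_H(x) = \sum_{H' \cong H} \prod_{e \in E(H')} x_e$ is an orbit sum ranging over one isomorphism class $H$ of labeled graphs on $[n]$. Nontriviality plus monotonicity forces $p(\emptyset) = 0$ and $p(K_n) = 1$ (otherwise $f$ is constant). The goal is to argue that these constraints, combined with the rigidity of the $S_n$-action on $\binom{[n]}{2}$, force the maximum $|E(H)|$ with $c_H \neq 0$ to be $\Omega(n^2)$. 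I would look either to invoke a classical ``degree evasiveness'' statement for monotone graph properties from the literature, or to derive the bound directly via a symmetrization argument exploiting $S_n$-invariance together with the Booleanness constraint $p(x) \in \{0,1\}$ for all $x \in \B^{\binom{n}{2}}$.

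The main obstacle is precisely this degree lower bound $\deg(f) = \Omega(n^2)$, which cannot be obtained by naive ``composition'' of known measure relationships. Combining the classical weak evasiveness bound $\D(f) = \Omega(n^2)$ (Rivest-Vuillemin, Kahn-Saks-Sturtevant and extensions) with Midrijanis' $\D(f) \leq \bs(f)\deg(f)$ and the Nisan-Szegedy bound $\bs(f) \leq O(\deg(f)^2)$ only yields $\deg(f)^3 = \Omega(n^2)$, i.e.\ $\deg(f) = \Omega(n^{2/3})$, from which Theorem~\ref{thm:deg vs. Q} recovers merely $\Q(f) = \Omega(n^{1/3})$---weaker even than the previously known $\Omega(n^{2/3}\log^{1/6} n)$. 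Reaching the sharp $\deg(f) = \Omega(n^2)$ bound therefore requires a genuinely structural argument tailored to the $S_n$-symmetry of graph properties, rather than a black-box chain of measure inequalities, and this is the heart of the proof.
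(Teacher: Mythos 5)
Your top-level plan is the same as the paper's: apply \Cref{thm:deg vs. Q} to reduce the claim to showing $\deg(f) = \Omega(n^2)$ for every nontrivial monotone graph property, and your observation that the naive chain $\D(f) \le \bs(f)\deg(f) = O(\deg(f)^3)$ combined with Rivest--Vuillemin only yields $\deg(f) = \Omega(n^{2/3})$ is correct and is exactly why a direct degree lower bound is needed. However, the degree lower bound itself is the entire technical content of the proof, and your proposal does not actually establish it: the sketch via orbit-sum decomposition $p = \sum_H c_H P_H$ and ``rigidity of the $S_n$-action'' is a plausible-sounding heuristic, but you never say why the constraints $p(\emptyset)=0$, $p(K_n)=1$, and $p(x)\in\{0,1\}$ force a nonzero $c_H$ with $|E(H)| = \Omega(n^2)$, and indeed that does not follow from these conditions alone without substantially more work. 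As stated, this is a gap, not a proof.

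The paper closes this gap in a way your proposal does not anticipate: rather than attack the real degree directly, it passes through the $\mathbb{F}_2$-degree. Dodis and Khanna~\cite{DK99} proved that every nontrivial monotone graph property satisfies $\deg_2(f) = \Omega(n^2)$, where $\deg_2$ is the degree of the unique multilinear representation over $\mathbb{F}_2$, and it is a standard fact (e.g., \cite[Prop.~6.23]{ODo09}) that $\deg_2(f) \le \deg(f)$ for all Boolean $f$. Chaining $\deg_2(f) = \Omega(n^2)$, $\deg_2(f) \le \deg(f)$, and $\deg(f) = O(\Q(f)^2)$ gives $\Q(f) = \Omega(n)$. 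The $\mathbb{F}_2$ route is what makes the degree bound tractable: over $\mathbb{F}_2$ one can run a Rivest--Vuillemin-style group-action and parity-counting argument cleanly, whereas a direct real-degree symmetrization argument of the kind you sketch is not known to the authors (they explicitly note they could not find one in the literature). If you want to complete your proof along your own lines, you would need to either supply that real-degree symmetrization argument in full, or import the Dodis--Khanna result as the paper does.
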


\Cref{thm:qAKR} follows by combining \Cref{thm:deg vs. Q} with a known quadratic lower bound on the degree of monotone graph properties.

\subsection{Known relations and separations}

\setlength{\intextsep}{0pt}%
\setlength{\columnsep}{15pt}%
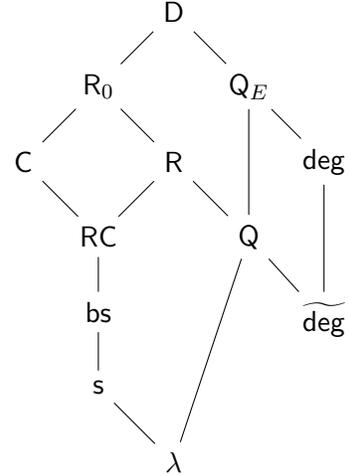
\begin{wrapfigure}{r}{0.35\textwidth}
\centering
\vspace{-2.5ex}
  \begin{tikzpicture}[x=1cm,y=1cm]

     \node (D) at(2,5){$\D$};
     \node (Rz) at(1,4){$\R_0$};
     \node (QE) at(3,4){$\Q_E$};
     \node (C) at(0,3){$\C$};
     \node (R) at(2,3){$\R$};
     \node (RC) at(1,2){$\RC$};
     \node (bs) at(1,1){$\bs$};
     \node (s) at(1,0){$\s$};
     \node (l) at(2,-1){$\lambda$};
     \node (deg) at(4,3){$\deg$};
     \node (Q) at(3,2){$\Q$};
     \node (adeg) at(4,0.9){$\adeg$};

     \path[-] (Rz) edge (D);
     \path[-] (QE) edge  (D);
     \path[-] (C) edge (Rz);
     \path[-] (R) edge (Rz);
     \path[-] (RC) edge (C);
     \path[-] (RC) edge (R);
     \path[-] (bs) edge (RC);
     \path[-] (bs) edge (s);
     \path[-] (s) edge (l);
     \path[-] (Q) edge (l);
     \path[-] (deg) edge (QE);
     \path[-] (Q) edge (QE);
     \path[-] (Q) edge (R);
     \path[-] (adeg) edge (Q);
     \path[-] (adeg) edge (deg);
  \end{tikzpicture}
    \caption{Relations between complexity measures. An upward line from a measure $M_1(f)$ to $M_2(f)$ denotes $M_1(f) = O(M_2(f))$ for all total functions $f$.\label{fig:rel}}
\end{wrapfigure}

\Cref{tab:sep} summarizes the known relations and separations between complexity measures studied in this paper (and more). This is an update to a similar table that appears in \cite{ABK16} with the addition of $\s(f)$ and $\lambda(f)$. 
Definitions and additional details about interpreting the table can be found in \cite{ABK16}.

For all the separations claimed in the table, we provide either an example of a separating function or a citation to a result that constructs such a function. All the relationships in the table follow by combining the relationships depicted in \Cref{fig:rel} and the following inequalities that hold for all total Boolean functions:

\begin{itemize}[noitemsep,topsep=4pt]
\item $\C(f) \leq \bs(f)\s(f)$ \cite{Nisan91}
\item $\D(f) \leq \bs(f)\C(f)$ \cite{BBCMW01}
\item $\D(f)\leq \bs(f)\deg(f)$ \cite{Mid04}
\item $\RC(f) = O(\adeg(f)^2)$ \cite{KT16}
\item $\R_0(f) = O(\R(f) \s(f) \log \RC(f))$ \cite{KT16}
\item $\deg(f) \leq \lambda(f)^2$ \cite{Huang2019}
\item $\s(f) \leq \lambda(f)^2$ (\Cref{lem:lambdalower})
\end{itemize}

\setlength{\tabcolsep}{2pt}
\renewcommand{\arraystretch}{1.3}
\begin{table}
\hspace{-1em}
\begin{minipage}{\linewidth}
\begin{center}
\caption{Best known separations between complexity measures}
\begin{tabular}{r||c|c|c|c|c|c|c|c|c|c|c|c}
{} & $\D$ & $\R_0$ & $\R$ & $\C$ & $\RC$ & $\bs$ & $\s$ & $\lambda$ & $\Q_E$ & $\deg$ & $\Q$ & $\adeg$ \\ \hline\hline

$\D$ &
\cellcolor{darkgray} & 
\ct{2, 2}{\smcite{ABB+15}} & 
\cc{2, 3}{\smcite{ABB+15}} &
\ct{2, 2}{$\wedge\circ\vee$} & 
\cc{2, 3}{$\wedge\circ\vee$} & 
\cc{2, 3}{$\wedge\circ\vee$} &
\co{3,\Huangbound{6}}{\smcite{BHT17}} &
\co{4,\Huangbound{6}}{\smcite{ABB+15}} &
\co{2, 3}{\smcite{ABB+15}} & 
\co{2, 3}{\smcite{GPW15}} & 
\ct{4,\newbound{4}}{\smcite{ABB+15}} &
\cc{4, 6}{\smcite{ABB+15}} \\ \hline

$\R_0$ &
\ct{1, 1}{$\oplus$} &
\cellcolor{darkgray} & 
\ct{2, 2}{\smcite{ABB+15}} &
\ct{2, 2}{$\wedge\circ\vee$} & 
\cc{2, 3}{$\wedge\circ\vee$} & 
\cc{2, 3}{$\wedge\circ\vee$} &
\co{3,\Huangbound{6}}{\smcite{BHT17}} &
\co{3,\Huangbound{6}}{\smcite{BHT17}} &
\co{2, 3}{\smcite{ABB+15}} &
\co{2, 3}{\smcite{GJPW15}} &
\co{3,\newbound{4}}{\smcite{ABB+15}} & 
\cc{4, 6}{\smcite{ABB+15}} \\ \hline

$\R$ &
\ct{1, 1}{$\oplus$} & 
\ct{1, 1}{$\oplus$} & 
\cellcolor{darkgray} & 
\ct{2, 2}{$\wedge\circ\vee$} &
\cc{2, 3}{$\wedge\circ\vee$} & 
\cc{2, 3}{$\wedge\circ\vee$} &
\co{3,\Huangbound{6}}{\smcite{BHT17}} &
\co{3,\Huangbound{6}}{\smcite{BHT17}} &
\co{$\frac{3}{2}$, 3}{\smcite{ABB+15}} &
\co{2, 3}{\smcite{GJPW15}} &
\co{$\frac{8}{3}$,\newbound{4}}{\smcite{Tal19}} & 
\cc{4, 6}{\smcite{ABB+15}} \\ \hline

$\C$ &
\ct{1, 1}{$\oplus$} & 
\ct{1, 1}{$\oplus$} & 
\co{1, 2}{$\oplus$} & 
\cellcolor{darkgray} &
\ct{2, 2}{\smcite{GSS13}} & 
\ct{2, 2}{\smcite{GSS13}} &
\co{2.22,\Huangbound{5}}{\smcite{BHT17}} &
\co{2.22,\Huangbound{6}}{\smcite{BHT17}} &
\co{1.15, 3}{\smcite{Amb13}}  & 
\co{1.63, 3}{\smcite{NW95}} & 
\co{2, 4}{$\wedge$} & 
\co{2, 4}{$\wedge$} \\ \hline

$\RC$ &
\ct{1, 1}{$\oplus$} & 
\ct{1, 1}{$\oplus$} & 
\ct{1, 1}{$\oplus$} & 
\ct{1, 1}{$\oplus$} & 
\cellcolor{darkgray} & 
\co{$\frac{3}{2}$, 2}{\smcite{GSS13}} &
\co{2,\Huangbound{4}}{\smcite{Rub95}} &
\co{2,\Huangbound{4}}{$\wedge$} &
\co{1.15, 2}{\smcite{Amb13}} & 
\co{1.63, 2}{\smcite{NW95}} & 
\ct{2, 2}{$\wedge$} & 
\ct{2, 2}{$\wedge$} \\ \hline

$\bs$ &
\ct{1, 1}{$\oplus$} & 
\ct{1, 1}{$\oplus$} & 
\ct{1, 1}{$\oplus$} & 
\ct{1, 1}{$\oplus$} & 
\ct{1, 1}{$\oplus$} & 
\cellcolor{darkgray} &
\co{2,\Huangbound{4}}{\smcite{Rub95}} &
\co{2,\Huangbound{4}}{$\wedge$} &
\co{1.15, 2}{\smcite{Amb13}} & 
\co{1.63, 2}{\smcite{NW95}} & 
\ct{2, 2}{$\wedge$} & 
\ct{2, 2}{$\wedge$} \\ \hline

$\s$ &
\ct{1, 1}{$\oplus$} &
\ct{1, 1}{$\oplus$} &
\ct{1, 1}{$\oplus$} &
\ct{1, 1}{$\oplus$} &
\ct{1, 1}{$\oplus$} &
\ct{1, 1}{$\oplus$} &
\cellcolor{darkgray} &
\ct{2,\newbound{2}}{$\wedge$} &
\co{1.15, 2}{\smcite{Amb13}} &
\co{1.63, 2}{\smcite{NW95}}  &
\ct{2, 2}{$\wedge$} & 
\ct{2, 2}{$\wedge$} \\ \hline

$\lambda$ &
\ct{1, 1}{$\oplus$} &
\ct{1, 1}{$\oplus$} &
\ct{1, 1}{$\oplus$} &
\ct{1, 1}{$\oplus$} &
\ct{1, 1}{$\oplus$} &
\ct{1, 1}{$\oplus$} &
\ct{1, 1}{$\oplus$} &
\cellcolor{darkgray} &
\ct{1, 1}{$\oplus$} &
\co{1, 2}{$\oplus$}  &
\ct{1, 1}{$\oplus$} &
\co{1, 2}{$\oplus$}  \\ \hline

$\Q_E$ &
\ct{1, 1}{$\oplus$} &
\co{1.33, 2}{$\bar{\wedge}$-tree} &
\co{1.33, 3}{$\bar{\wedge}$-tree} &
\ct{2, 2}{$\wedge\circ\vee$} &
\cc{2, 3}{$\wedge\circ\vee$} &
\cc{2, 3}{$\wedge\circ\vee$} &
\co{3,\Huangbound{6}}{\smcite{BHT17}} &
\co{3,\Huangbound{6}}{\smcite{BHT17}} &
\cellcolor{darkgray} &
\co{2, 3}{\smcite{ABK16}} &
\co{2,\newbound{4}}{$\wedge$} &
\cc{4, 6}{\smcite{ABK16}} \\ \hline

$\deg$ & 
\ct{1, 1}{$\oplus$} &
\co{1.33, 2}{$\bar{\wedge}$-tree} &
\co{1.33,\Huangbound{2}}{$\bar{\wedge}$-tree} &
\ct{2, 2}{$\wedge\circ\vee$} &
\ct{2,\Huangbound{2}}{$\wedge\circ\vee$} &
\ct{2,\Huangbound{2}}{$\wedge\circ\vee$} &
\ct{2,\Huangbound{2}}{$\wedge\circ\vee$} &
\ct{2,\Huangbound{2}}{$\wedge$} &
\ct{1, 1}{$\oplus$}&
\cellcolor{darkgray} &
\ct{2,\newbound{2}}{$\wedge$} &
\co{2,\Huangbound{4}}{$\wedge$} \\ \hline

$\Q$ &
\ct{1, 1}{$\oplus$} &
\ct{1, 1}{$\oplus$} &
\ct{1, 1}{$\oplus$} &
\ct{2, 2}{\smcite{ABK16}} &
\cc{2, 3}{\smcite{ABK16}} &
\cc{2, 3}{\smcite{ABK16}} &
\co{3,\Huangbound{6}}{\smcite{BHT17}} &
\co{3,\Huangbound{6}}{\smcite{BHT17}} &
\ct{1, 1}{$\oplus$} &
\co{2, 3}{\smcite{ABK16}} &
\cellcolor{darkgray} &
\cc{4, 6}{\smcite{ABK16}} \\ \hline

\raisebox{-2pt}{$\adeg$} &
\ct{1, 1}{$\oplus$} &
\ct{1, 1}{$\oplus$} &
\ct{1, 1}{$\oplus$} &
\ct{$2$, 2}{\smcite{BT17}} &
\ct{$2$,\Huangbound{2}}{\smcite{BT17}} &
\ct{$2$,\Huangbound{2}}{\smcite{BT17}} &
\ct{2,\Huangbound{2}}{\smcite{BT17}} &
\ct{2,\Huangbound{2}}{\smcite{BT17}} &
\ct{1, 1}{$\oplus$} &
\ct{1, 1}{$\oplus$}  &
\ct{1, 1}{$\oplus$} &
\cellcolor{darkgray} \\ 
\end{tabular}
\label{tab:sep}
\end{center}
\begin{itemize}[itemsep=3pt]
    \item  An entry $a,b$ in the row $M_1$ and column $M_2$ roughly means that for all total functions $f$, $M_1(f) \leq M_2(f)^{b+o(1)}$ and there exists a function $g$ with $M_1(g) \geq M_2(g)^{a-o(1)}$ (see \cite{ABK16} for a precise definition). 
    \item {The second row of each cell contains an example of a function that achieves the separation (or a citation to an example), 
    where $\oplus = \textsc{parity}$, 
    $\wedge = \textsc{and}$, 
    $\vee = \textsc{or}$, 
    $\wedge \circ \vee = \textsc{and-or}$, 
    and $\bar{\wedge}$-tree is the balanced \textsc{nand}-tree function.}
    \item Cells have a white background if the relationship is optimal and a gray background otherwise.
    \item Entries with a \Huangbound{green} background follow from Huang's result. Entries with a \newbound{red} background follow from this work.
\end{itemize}
\end{minipage}
\end{table}
\renewcommand{\arraystretch}{1}

\subsection{Paper organization}

\Cref{sec:prelim} contains some preliminaries required to understand the proof of \Cref{thm:deg vs. Q}, which is proved in \Cref{sec:mainproof}. \Cref{sec:AKR}  gives some background and motivation for the Aanderaa--Karp--Rosenberg conjecture and proves \Cref{thm:qAKR}. We end with some open problems in \Cref{sec:open}. 

\Cref{sec:properties} describes some properties of $\lambda(f)$, its many equivalent formulations, and its relationship with other complexity measures.

\section{Preliminaries}\label{sec:prelim}

\subsection{Query complexity}

Let $f:\B^n \to\B$ be a Boolean function.
Let $A$ be a deterministic
algorithm that computes $f(x)$ on input
$x\in\B^n$ by making queries to the bits of $x$.
The worst-case number of queries $A$ makes (over choices
of $x$) is the query complexity of $A$. The minimum query
complexity of any deterministic algorithm computing $f$
is the deterministic query complexity of $f$, denoted
by $\D(f)$.

We define the bounded-error
randomized (respectively quantum) query complexity
of $f$, denoted by $\R(f)$ (respectively $\Q(f)$),
in an analogous way. We say an algorithm $A$ computes $f$
with bounded error if $\Pr[A(x)=f(x)]\geq 2/3$ for all
$x\in\B^n$, where the probability is over the internal
randomness of $A$. Then $\R(f)$ (respectively $\Q(f)$)
is the minimum number of queries required by any
randomized (respectively quantum) algorithm that computes $f$
with bounded error. It is clear that $\Q(f)\leq \R(f)\leq \D(f)$.
For more details on these measures,
see the survey by Buhrman and de Wolf \cite{BdW02}.

\subsection{Sensitivity and block sensitivity}

Let $f:\B^n\to\B$ be a Boolean function, and let $x\in\B^n$
be a string. A block is a subset of $[n]$. We say that a block $B\in[n]$
is sensitive for $x$ (with respect to $f$) if
$f(x \oplus \mathbbold{1}_B)\neq f(x)$, where $\mathbbold{1}_B$ is the $n$-bit string that
is $1$ on bits in $B$ and $0$ otherwise. 
We say a bit $i$ is sensitive for $x$
if the block $\{i\}$ is sensitive for $x$.
The maximum number of disjoint blocks
that are all sensitive for $x$
is called the block sensitivity of $x$ (with respect to $f$),
denoted by $\bs_x(f)$. The number of sensitive bits for $x$
is called the sensitivity of $x$,
denoted by $\s_x(f)$. Clearly, $\bs_x(f)\geq\s_x(f)$,
since $\s_x(f)$ is has the same definition as $\bs_x(f)$
except that the size of the blocks is restricted to $1$.
We define $\s(f) = \max_{x\in \B^n}{\s_x(f)}$ and $\bs(f) = \max_{x\in \B^n}{\bs_x(f)}$.

\subsection{Degree measures}\label{sec:degs}

A polynomial $q \in \Reals[x_1, \ldots, x_n]$ is said to represent
the function $f:\B^n\to\B$ if
$q(x)=f(x)$ for all $x\in\B^n$.
A polynomial $q$ is said to $\eps$-approximate $f$ if $q(x)\in[0,\eps]$
for all $x\in f^{-1}(0)$ and $q(x)\in[1-\eps,1]$ for all
$x\in f^{-1}(1)$.
The degree of $f$, denoted by
$\deg(f)$, is the minimum degree of a polynomial representing $f$.
The $\eps$-approximate degree, denoted by $\tilde{\deg}_\eps(f)$,
is the minimum degree of a polynomial $\eps$-approximating $f$.
We will omit $\eps$ when $\eps=1/3$.
We know that $\D(f)\geq\deg(f)$,
$\R(f)\geq\tilde{\deg}(f)$, and $\Q(f)\geq\tilde{\deg}(f)/2$.

The degree of $f$ as a polynomial is also called the Fourier-degree of $f$, which equals $\max\{|S|:|\hat{f}(S)|\neq 0\}$ where $\hat{f}(S) := \E_x[f(x) \cdot (-1)^{\sum_{i\in S} x_i}]$. In particular, $\deg(f)<n$ if and only if $f$ agrees with the Parity function, $\Parity_n(x) = \oplus_{i=1}^{n}x_i$, on exactly half of the inputs.

\section{Proof of main result (\texorpdfstring{\Cref{thm:deg vs. Q}}{Theorem 2})}\label{sec:mainproof}

Before proving \Cref{thm:deg vs. Q}, which is based on Huang's proof, we reinterpret his result in terms of a new complexity measure of Boolean functions that we call $\lambda(f)$: the spectral norm of the sensitivity graph of $f$.

\begin{definition}[Sensitivity Graph $G_f$, Spectral Sensitivity $\lambda(f)$]
	Let $f:\B^n \to \B$ be a Boolean function.
	The {\sf sensitivity graph} of $f$, $G_f = (V,E)$ is a subgraph of the Boolean hypercube, where $V= \B^n$, and $E = \{(x,x\oplus e_i)\in V\times V: i\in [n], f(x)\neq f(x\oplus e_i)\}$. That is, $E$ is the set of edges between neighbors on the hypercube that have different $f$-value. Let $A_f$ be the adjacency matrix of the graph $G_f$.  We define the {\sf spectral sensitivity} of $f$ as $\lambda(f)=\norm{A_f}$.
\end{definition}

Note that because $A_f$ is a real symmetric matrix, $\lambda(f)$ is also the largest eigenvalue of $A_f$. Since $G_f$ is bipartite, the largest and smallest eigenvalues of $A_f$ are equal in magnitude. 

Huang's proof of the sensitivity conjecture can be divided into two steps:
\begin{enumerate}
	\item $\forall{f}: \deg(f) \le \lambda(f)^2$
	\item $\forall{f}: \lambda(f) \le \s(f)$
\end{enumerate} 

The second step is the simple fact that the spectral norm of an adjacency matrix is at most the maximum degree of any vertex in the graph, which equals $\s(f)$ in this case.

We reprove the first claim, i.e., $\deg(f) \le \lambda(f)^2$, for completeness.

\begin{theorem}[\cite{Huang2019}]\label{lemma:Huang}\label{thm:Huang}
For all  Boolean functions $f:\B^n \to \B$, we have
	$\deg(f) \le \lambda(f)^2$.
\end{theorem}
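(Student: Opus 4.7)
The plan is to exhibit inside the sensitivity graph $G_f$ a copy of the induced subgraph of the Boolean $d$-cube on a vertex set $S$ with $|S|>2^{d-1}$, where $d=\deg(f)$, and then run Huang's signed-matrix / eigenvalue-interlacing argument on this subgraph. Since the largest eigenvalue of a nonnegative symmetric matrix is monotone both under passing to a principal submatrix and under adding nonnegative entries, such a set $S$ will yield $\lambda(f) \geq \lambda(H_d[S]) \geq \sqrt{d}$, proving $\deg(f) \leq \lambda(f)^2$.

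First I would locate the relevant subcube. Let $d=\deg(f)$ and choose $T \subseteq [n]$ with $|T|=d$ and $\hat{f}(T)\neq 0$. Splitting the sum defining $\hat f(T)$ shows that $\hat f(T)$ is the average, over restrictions $y\in\B^{[n]\setminus T}$ of the variables outside $T$, of the top Fourier coefficient of the restricted function $g_y(z):=f(z,y)$. So some restriction $y^*$ yields $g\colon \B^d\to\B$ with $\hat g([d])\neq 0$. A direct count shows this is equivalent to $|\{x: g(x)=\Parity(x)\}|\neq 2^{d-1}$, so after replacing $g$ by $1-g$ if necessary (which leaves the sensitivity graph $G_g$ unchanged), I may assume $S:=\{x\in\B^d: g(x)=\Parity(x)\}$ satisfies $|S|>2^{d-1}$.

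Next I would compare edge sets. For any hypercube edge $(x,x\oplus e_i)$ whose two endpoints both lie in $S$, the parities of $x$ and $x\oplus e_i$ disagree, so $g(x)\neq g(x\oplus e_i)$; hence this edge lies in $G_g$. Thus the induced hypercube subgraph $H_d[S]$ is a subgraph of $G_g$, and $G_g$ is itself isomorphic to the induced subgraph of $G_f$ on the vertex set $\{(z,y^*):z\in\B^d\}$. Two applications of spectral monotonicity for nonnegative symmetric matrices then give
\[
\lambda(f) \;\geq\; \lambda(G_g) \;\geq\; \lambda(H_d[S]).
\]

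Finally I would invoke Huang's signed hypercube matrix. Define $\pm 1$ matrices recursively by
\[
A_1=\begin{pmatrix}0&1\\1&0\end{pmatrix}, \qquad A_{d+1}=\begin{pmatrix}A_d & I \\ I & -A_d\end{pmatrix},
\]
and verify by induction that $A_d^2=d\cdot I$ while the entrywise absolute value $|A_d|$ is the adjacency matrix of the $d$-cube. Hence the spectrum of $A_d$ is $\{\pm\sqrt d\}$, each eigenvalue with multiplicity $2^{d-1}$. By Cauchy interlacing, any $|S|\times|S|$ principal submatrix $A_d[S]$ with $|S|>2^{d-1}$ has top eigenvalue $\geq\sqrt d$. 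Combined with the elementary inequality $\|B\|\leq \|\,|B|\,\|$ (apply Perron--Frobenius to the Perron vector of $|B|$ and to the unit eigenvector realizing $\|B\|$), this gives $\sqrt d \leq \|A_d[S]\| \leq \lambda(H_d[S]) \leq \lambda(f)$, and squaring finishes the proof. The only real obstacle is constructing $A_d$ and checking $A_d^2=d\cdot I$: this is Huang's key trick, and without it the interlacing step has nothing to latch onto; the remaining ingredients are a one-line Fourier restriction, a combinatorial edge-set comparison, and standard spectral monotonicity.
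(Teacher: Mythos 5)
Your proposal is correct and follows essentially the same route as the paper (which is Huang's original argument): restrict to a $d$-dimensional subcube on which the degree is preserved, observe that the set $S$ of inputs agreeing with parity has size $>2^{d-1}$ and that all hypercube edges inside $S$ are sensitive, and then use Huang's signed matrix $B_d$ with $B_d^2=dI$ to force a $\sqrt d$ eigenvalue. The only cosmetic differences are that you invoke Cauchy interlacing and the inequality $\|B\|\le\|\,|B|\,\|$ explicitly, whereas the paper packages the same facts by directly producing a $\sqrt n$-eigenvector of $B_n$ that vanishes off $S$, taking its entrywise absolute value, and computing $\|A_f v'\|\ge\sqrt n\,\|v'\|$; and you locate the degree-preserving restriction by Fourier averaging rather than by fixing the complementary variables to $0$.
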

\begin{proof}
	Without loss of generality we can assume that $\deg(f) = n$ since otherwise we can restrict our attention to a subcube of dimension $\deg(f)$ in which the degree remains the same and the top eigenvalue is at most $\lambda(f)$. Specifically, we can choose any monomial in the polynomial representing $f$ of degree $\deg(f)$ and set all the variables not appearing in this monomial to $0$.
		
	For $f$ with $\deg(f)=n$, let $V_0 = \{x\in \B^n: f(x) = \Parity_n(x)\}$ and $V_1=\{x\in \B^n: f(x)\neq \Parity_n(x)\}$.
	By the fact that $\deg(f)=n$ we know that $|V_0|\neq |V_1|$ as otherwise $f$ would have $0$ correlation with the $n$-variate parity function, implying that $f$'s top Fourier coefficient is $0$.
	
	We also note that any edge in the hypercube that goes between $V_0$ and $V_0$ is an edge in $G_f$ since it changes the value of $f$. This holds since for such an edge, $(x,x\oplus e_i)$, we have $f(x) = \Parity_n(x)\neq \Parity_n(x \oplus e_i) = f(x \oplus e_i)$.
	Similarly, any edge in the hypercube that goes between $V_1$ and $V_1$ is an edge in $G_f$. 
	
Assume without loss of generality that $|V_0| > |V_1|$. Thus, $|V_0|\ge 2^{n-1} + 1$. We will show that there exists a nonzero vector $v'$ supported only on the entries of $V_0$, such that $\|A_f \cdot v'\| \ge \sqrt{n} \cdot \|v'\|$.

Let $G=(V,E)$ be the complete $n$-dimensional Boolean Hypercube. That is, $V = \B^n$ and $E = \{(x,x\oplus e_i)\;:\;x\in \B^n, i\in [n]\}$.
Take the following signing of the edges of the Boolean hypercube, defined recursively. 
\begin{equation}
B_1 = \begin{pmatrix}0&1 \\1&0\end{pmatrix} \  \mathrm{and} \  B_i = \begin{pmatrix}B_{i-1}&I \\I&-B_{i-1}\end{pmatrix} \   \text{for $i\in \{2,\ldots,n\}$}.    
\end{equation}

This gives a new matrix $B_n \in \{-1,0,1\}^{V\times V}$ where $B_n(x,y)=0$ if and only if $x$ is not a neighbor of $y$ in the hypercube. 

Huang showed that $B_n$ has $2^{n}/2$ eigenvalues that equal $-\sqrt{n}$ and $2^{n}/2$ eigenvalues that equal $+\sqrt{n}$.
To show this, he showed that $B_n^2 = n\cdot I$ by induction on $n$ and thus all eigenvalues of $B_n$ must be either $+\sqrt{n}$ or $-\sqrt{n}$. Then, observing that the trace of $B_n$ is $0$, as all diagonal entries equal $0$, we see that we must have an equal number of $+\sqrt{n}$ and $-\sqrt{n}$ eigenvalues. 

Thus, the subspace of eigenvectors for $B_n$ with eigenvalue $\sqrt{n}$ is of dimension $2^{n}/2$. Using $|V_1|<2^n/2$, there must exists a nonzero eigenvector for $B_n$ with eigenvalue $\sqrt{n}$ that vanishes on $V_1$. 
Fix $v$ to be any such vector. 

Let $v'$  be the vector whose entries are the absolute values of the entries of $v$.
We claim that $\|A_f \cdot v'\|_2 \ge \sqrt{n}\cdot \|v'\|_2$.
To see so, note that for every $x\in V_0$ we have 
\begin{align}(A_f \cdot v')_x &= \sum_{y\sim x:f(y)\neq f(x)} v'_y = \sum_{y\sim x:y\in V_0} v'_y = \sum_{y\sim x} v'_y \nonumber \\ 
	&\ge \sum_{y\in \B^n} |B_{x,y} v_y| \ge  \left|\sum_{y\in \B^n}B_{x,y} v_y\right| = \sqrt{n}\cdot |v_x| = \sqrt{n} \cdot v'_x\;.
\end{align}
On the other hand, for $x\in V_1$ we have $(A_f \cdot v')_x = 0 = v'_x$.
Thus the norm of $A_f\cdot v'$ is at least $\sqrt{n}$ times the norm of $v'$, and hence $\lambda(f)= \|A_f\| \ge \sqrt{n} = \sqrt{\deg(f)}$. 
\end{proof}

Finally, we prove that $\lambda(f) = O(\Q(f))$.
We rely on a variant of the adversary method introduced by Barnum, Saks, and Szegedy \cite{BSS03} (see also~\cite{SSpalekS06}).

\begin{definition}[Spectral Adversary method]
Let $\{D_i\}_{i\in [n]}$ and $F$ be matrices of size $\B^n \times \B^n$ with entries in $\B$ satisfying $D_i[x,y]=1$ if and only if $x_i\neq y_i$, and $F[x,y]=1$ if and only if $f(x)\neq f(y)$. Let $\Gamma$ denote a $\B^n \times \B^n$ nonnegative symmetric matrix such that $\Gamma\circ F = \Gamma$ (i.e., the nonzero entries of  $\Gamma$ are a subset of the  the nonzero entries of $F$). Then 
$\SA(f) = \max_{\Gamma} \frac{\|\Gamma\|}{\max_{i\in[n]}{\|\Gamma \circ D_i\|}}$.
\end{definition}
Barnum, Saks, and Szegedy \cite{BSS03} proved that $\Q(f) = \Omega(\SA(f))$.
\begin{lemma}\label{lemma:Q vs rho}
	For all Boolean functions $\Q(f) = \Omega(\SA(f)) = \Omega(\lambda(f))$.
\end{lemma}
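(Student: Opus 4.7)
The plan is to exhibit a specific adversary matrix $\Gamma$ for the spectral adversary method that witnesses $\SA(f) \geq \lambda(f)$; combined with the Barnum--Saks--Szegedy bound $\Q(f) = \Omega(\SA(f))$, this gives the lemma.

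The natural candidate is $\Gamma = A_f$, the adjacency matrix of the sensitivity graph itself. First I would verify that $A_f$ is a valid adversary matrix: it is a nonnegative symmetric $\B^n \times \B^n$ matrix, and its nonzero entries are indexed by pairs $(x,y)$ with $f(x) \neq f(y)$, so $A_f \circ F = A_f$. By definition, $\|A_f\| = \lambda(f)$, which gives the numerator in the spectral adversary ratio.

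Next I would bound $\|A_f \circ D_i\|$ for each coordinate $i \in [n]$. The key observation is that edges in $G_f$ are edges of the Boolean hypercube, so every edge flips exactly one coordinate. Consequently $(A_f \circ D_i)[x,y]$ is nonzero only when $y = x \oplus e_i$ and $f(x) \neq f(y)$. Thus each row of $A_f \circ D_i$ has at most one nonzero entry (in column $x \oplus e_i$), and that entry equals $1$. This means $A_f \circ D_i$ is a symmetric permutation-like $\{0,1\}$-matrix that represents a partial perfect matching on $\B^n$ (pairing $x$ with $x \oplus e_i$ whenever that edge is sensitive). Such a matrix has operator norm at most $1$.

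Combining these, $\SA(f) \geq \|A_f\| / \max_i \|A_f \circ D_i\| \geq \lambda(f)/1 = \lambda(f)$, and invoking $\Q(f) = \Omega(\SA(f))$ finishes the proof. There is no real obstacle here once one picks the right $\Gamma$; the only step that requires a sentence of justification is that $A_f \circ D_i$ is a partial matching and hence has spectral norm at most $1$. The elegance of the argument is that the very object whose spectral norm defines $\lambda(f)$ is itself the optimal adversary matrix (up to constants), which is what makes $\lambda$ a clean lower bound on $\Q$.
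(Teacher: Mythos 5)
Your proposal is correct and matches the paper's argument exactly: both take $\Gamma = A_f$, note $\|\Gamma\| = \lambda(f)$, observe that $\Gamma \circ D_i$ is a partial matching (equivalently, a graph of maximum degree $1$) and hence has spectral norm at most $1$, and then invoke the Barnum--Saks--Szegedy bound $\Q(f) = \Omega(\SA(f))$. No meaningful differences.
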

\begin{proof}
	We prove that $\SA(f) \ge \lambda(f)$.
	Indeed, one can take $\Gamma$ to be simply the adjacency matrix of $G_f$. That is, for any $x,y\in \B^n$ put $\Gamma[x,y] = 1$ if and only if $y \sim x$ in the hypercube and $f(x)\neq f(y)$. We observe that $\|\Gamma\|=\lambda(f)$. On the other hand, for any $i\in [n]$, $\Gamma \circ D_i$ is  the restriction of the sensitive edges in direction $i$. 
	The maximum degree in the graph represented by $\Gamma \circ D_i$ is $1$ hence  $\|\Gamma \circ D_i\|$ is at most $1$.
	Thus we have 
	\begin{equation}
			\SA(f) \ge \frac{\|\Gamma\|}{\max_{i\in[n]}{\|\Gamma \circ D_i\|}} \ge \lambda(f).
	\end{equation}
	Combining this with $\Q(f) = \Omega(\SA(f))$~\cite{BSS03}, we get $\Q(f) = \Omega(\SA(f)) = \Omega(\lambda(f))$.
\end{proof}

From \Cref{lemma:Huang} and \Cref{lemma:Q vs rho} we immediately get \Cref{thm:deg vs. Q}.

\section{Monotone graph properties}
\label{sec:AKR}

The Aanderaa--Karp--Rosenberg conjectures are a collection of conjectures related to the query complexity of deciding whether an input graph specified by its adjacency matrix satisfies a given property in various models of computation. 

Specifically, let the input be an $n$-vertex undirected simple graph specified by its adjacency matrix. This means we can query any unordered pair $\{i,j\}$, where $i,j\in[n]$, and learn whether there is an edge between vertex $i$ and $j$. Note that the input size is $\binom{n}{2}=\Theta(n^2)$. 

A function $f$ on $\binom{n}{2}$ variables is a graph property if it treats the input as a graph and not merely a string of length $\binom{n}{2}$. Specifically, the function must be invariant under permuting vertices of the graph. 
In other words, the function can only depend on the isomorphism class of the graph, not the specific labels of the vertices. 
A function $f$ is monotone (increasing) if for all $x,y\in\B^n$, $x \leq y \implies f(x) \leq f(y)$, where $x \leq y$ means $x_i \leq y_i$ for all $i\in[n]$. 
For a monotone function, negating a $0$ in the input cannot change the function value from $1$ to $0$. 
In the context of graph properties, if the input graph has a certain monotone graph property, then adding more edges cannot destroy the property. 

Examples of monotone graph properties include ``$G$ is connected,'' ``$G$ contains a clique of size $k$,'' ``$G$ contains a Hamiltonian cycle,'' ``$G$ has chromatic number greater than $k$,'' ``$G$ is not planar'', and ``$G$ has diameter at most $k$.'' Many commonly encountered graph properties (or their negation) are monotone graph properties. Finally, we say a function $f:\B^n\to\B$ is nontrivial if there exist inputs $x$ and $y$ such that $f(x)\neq f(y)$.

The deterministic Aanderaa--Karp--Rosenberg conjecture, also called the \emph{evasiveness conjecture},\footnote{A function $f$ is called \emph{evasive} if its deterministic query complexity equals its input size.} states that for all nontrivial monotone graph properties $f$, $\D(f) = \binom{n}{2}$. This conjecture remains open to this day, although the weaker claim that $\D(f)=\Omega(n^2)$ was proved over 40 years ago by Rivest and Vuillemin~\cite{RV76}. Several works have improved on the constant in their lower bound, and the best current result is due to Scheidweiler and Triesch~\cite{ST13}, who prove a lower bound of $\D(f)\geq (1/3-o(1)) \cdot n^2$. The evasiveness conjecture has been established in several special cases including when $n$ is prime~\cite{KSS84} and when restricted to bipartite graphs~\cite{Yao88}.

The randomized Aanderaa--Karp--Rosenberg conjecture asserts that all nontrivial monotone graph properties $f$ satisfy $\R(f) = \Omega(n^2)$. A sequence of increasingly stronger lower bounds, starting with a lower bound of $\Omega(n\log^{1/12}n)$ due to Yao~\cite{Yao91}, a lower bound of $\Omega(n^{5/4})$ due to King~\cite{Kin88}, and a lower bound of $\Omega(n^{4/3})$ due to Hajnal~\cite{Haj91}, has led to the current best lower bound of $\Omega(n^{4/3}\log^{1/3}n)$ due to Chakrabarti and Khot~\cite{CK01}. There are also two lower bounds due to  Friedgut, Kahn, and Wigderson~\cite{FKW02} and O'Donnell, Saks,  Schramm, and Servedio~\cite{OSSS05} that are better than this bound for some graph properties.

The quantum Aanderaa--Karp--Rosenberg conjecture states that all nontrivial monotone graph properties $f$ satisfy $\Q(f) = \Omega(n)$. This is the best lower bound one could hope to prove since there exist properties with $\Q(f) = O(n)$, such as the property of containing at least one edge. In fact, for any $\alpha \in [1,2]$ it is possible to construct a graph property with quantum query complexity $\Theta(n^{\alpha})$ using known lower bounds for the threshold function~\cite{BBCMW01}.

As stated in the introduction, the question was first raised by Buhrman, Cleve, de Wolf, and Zalka~\cite{BCdWZ99}, who showed a lower bound of $\Omega(\sqrt{n})$. This was improved by Yao to $\Omega(n^{2/3}\log^{1/6}n)$ using the technique in \cite{CK01} and Ambainis' adversary bound~\cite{Amb02}. Better lower bounds are known in some special cases, such as when the property is a subgraph isomorphism property, where we know a lower bound of $\Omega(n^{3/4})$ due to Kulkarni and Podder~\cite{KP16}.

As stated in \Cref{thm:qAKR}, we resolve the quantum Aanderaa--Karp--Rosenberg conjecture and show an optimal $\Omega(n)$ lower bound. The proof combines \Cref{thm:deg vs. Q} with a quadratic lower bound on the degree of nontrivial monotone graph properties. With some work, the original quadratic lower bound on the deterministic query complexity of nontrivial monotone graph properties by Rivest and Vuillemin~\cite{RV76} can be modified to prove a similar lower bound for degree. 
We were not able to find such a proof in the literature, and instead combine the following two claims to obtain the desired claim.

First, we use the result of Dodis and Khanna~\cite[Theorem 2]{DK99}:
\begin{theorem}
For all nontrivial monotone graph properties, $\deg_2(f) = \Omega(n^2)$.
\end{theorem}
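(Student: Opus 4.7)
The plan is to recast the $\mathbb{F}_2$-degree as a parity-of-subgraph-count and then exploit the automatic symmetry of graph properties. By Möbius inversion over $\mathbb{F}_2$, the coefficient of $\prod_{e \in S} x_e$ in the unique $\mathbb{F}_2$-polynomial representing $f$ equals $\sum_{T \subseteq S} f(T) \bmod 2$. Consequently, to show $\deg_2(f) \ge c\binom{n}{2}$, it suffices to exhibit a single graph $S$ on $[n]$ with $|E(S)| = \Omega(n^2)$ such that the number of edge-subsets $T \subseteq S$ with $f(T) = 1$ is odd. The whole proof reduces to producing such a witness $S$.

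First, I would try the cleanest witness, $S = K_n$. Because $f$ is a graph property, subgraphs of $K_n$ are grouped into isomorphism classes under the action of $S_n$, and $\sum_{T \subseteq K_n} f(T) \bmod 2$ equals the sum over unlabeled $f$-satisfying graphs $H$ of their orbit sizes $n!/|\mathrm{Aut}(H)|$, reduced mod $2$. Most orbits have even size, and the ones that can survive mod $2$ are exactly those $H$ whose automorphism group contains a Sylow $2$-subgroup of $S_n$ — a very restricted collection of highly symmetric graphs. Nontriviality (in particular $f(\emptyset) \ne f(K_n)$) together with monotonicity then forces at least one of these surviving contributions to be odd, for at least one choice of $n$ in any $\Omega(n)$-window.

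If, for a given $n$, all surviving orbits happen to cancel at $S = K_n$, I would perturb $S$ by removing a small distinguished substructure (for instance a matching or a single well-chosen edge), shrinking $|E(S)|$ by only $O(n)$ so that we still have $|E(S)| = \Omega(n^2)$, and redo the parity count with respect to the smaller automorphism group of the perturbed $S$. Since any minimal modification changes the subgraph count in a very constrained way, one of polynomially many such perturbations must yield an odd total; this is essentially the Dodis--Khanna strategy.

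The main obstacle will be the orbit-parity bookkeeping: ensuring that the set of ``surviving'' isomorphism classes is small enough to enumerate yet rich enough to detect the nontriviality of $f$. I expect this to rely on a number-theoretic choice of $n$ (every integer is within $O(1)$ of a prime, letting us work on a prime-sized vertex subset and then pad the remaining vertices as isolated ones) together with Lucas's theorem on binomial coefficients modulo $2$ to classify exactly which orbit sizes vanish in characteristic $2$. Once that classification is in hand, the lower bound of $\Omega(n^2)$ — as opposed to the full $\binom{n}{2}$ one might optimistically hope for — should fall out, since we only need the witness $S$ to have a constant fraction of all possible edges.
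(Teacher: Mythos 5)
The structural outline—Möbius inversion over $\mathbb{F}_2$, grouping subgraphs into $S_n$-orbits, and noting that only orbits whose stabilizers contain a Sylow $2$-subgroup of $S_n$ survive mod $2$—is correct as far as it goes, and it is indeed the kind of machinery the Rivest--Vuillemin lineage uses. But the step you flag as ``nontriviality together with monotonicity then forces at least one of these surviving contributions to be odd'' is exactly the crux, and as stated it is false. The $S_n$-invariant graphs with odd orbit size are themselves organized into a (small) lattice under inclusion, and the restriction of $f$ to that lattice is some monotone function on $O(\log n)$ levels; its parity of ones can perfectly well be even. Concretely, for $f =$ ``connected'' on $n=3$ vertices, $\sum_{T\subseteq K_3} f(T) = 4 \equiv 0 \pmod 2$, so the coefficient of the full monomial already vanishes. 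Nontriviality only gives you $f(\emptyset)=0$ and $f(K_n)=1$, which determines two entries of the lattice restriction but says nothing about the others, and hence nothing about the total parity.

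The fallback of perturbing $S = K_n$ by deleting an edge or a matching is where the entire content of the theorem would have to live, and you have not given an argument for why any such perturbation must succeed; once you remove an edge, the relevant symmetry group becomes a point/edge stabilizer, the set of surviving orbits changes, and there is no a priori reason the parity flips to odd. What actually closes the argument (in the Dodis--Khanna style and in related Rivest--Vuillemin arguments) is more structured: one passes to a $2$-group $\Gamma$ acting transitively on a large \emph{edge orbit} $O$ of size $\Omega(n^2)$ (for instance, the $m\times m$ bipartite edges between two blocks under a product of Sylow $2$-subgroups), and then locates $\Gamma$-invariant graphs $A\subsetneq B = A\sqcup O$ with $f(A)\neq f(B)$. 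The restriction of $f$ to the subcube $[A,B]$ is a $\Gamma$-invariant function on $|O|$ variables, and because $\Gamma$ is a $2$-group acting transitively on $O$, the only $\Gamma$-fixed subsets of $O$ are $\emptyset$ and $O$ itself, so the top $\mathbb{F}_2$-coefficient of this restriction is $f(A)+f(B)\equiv 1$, yielding $\deg_2(f)\geq |O|$. The remaining work—guaranteeing such $A,B$ exist, and handling the case where the flip refuses to happen at the big orbit (which forces additional structure such as $f(K_{m,m})=0$ and $f(K_m\sqcup K_m)=1$, enabling a recursion or a second decomposition)—is nontrivial and is precisely what your sketch leaves unaddressed. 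So the plan is pointed in the right direction but has a genuine gap: the parity at $S=K_n$ is not controlled by nontriviality alone, and you need an explicit mechanism for choosing a large subcube where the $2$-group orbit argument forces odd parity.
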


Here $\deg_2(f)$ is the minimum degree of a Boolean function when represented as a polynomial over the finite field with two elements, $\mathbb{F}_2$. We combine this with a standard lemma that shows that this measure lower bounds $\deg(f)$. A proof can be found in~\cite[Proposition 6.23]{ODo09}:
\begin{lemma}
For all Boolean functions $f:\B^n \to \B$, we have $\deg_2(f) \leq \deg(f)$.
\end{lemma}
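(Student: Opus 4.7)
The plan is to exhibit a polynomial over $\F_2$ that represents $f$ and whose degree is at most $\deg(f)$. I would start from the unique multilinear polynomial $p \in \Reals[x_1,\ldots,x_n]$ representing $f$ on $\B^n$; by definition this polynomial has degree exactly $\deg(f)$. The goal is to reduce this polynomial mod $2$ and argue that the resulting $\F_2$-polynomial still computes $f$.

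The key step is to observe that every coefficient of $p$ is an integer. Writing $p(x) = \sum_{S \subseteq [n]} c_S \prod_{i \in S} x_i$, Möbius inversion on the Boolean lattice (equivalently, inclusion-exclusion applied to the values of $f$ on subcubes) gives the explicit formula $c_S = \sum_{T \subseteq S} (-1)^{|S|-|T|} f(\mathbbold{1}_T)$, which is an integer since $f$ is $\B$-valued. Having integer coefficients, I can reduce $p$ coefficient-wise modulo $2$ to obtain a polynomial $\bar{p} \in \F_2[x_1,\ldots,x_n]$ with $\deg(\bar{p}) \leq \deg(p) = \deg(f)$.

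Finally, because the coefficient reduction $\Z \to \F_2$ is a ring homomorphism, for every $x \in \B^n$ (viewed simultaneously in $\Reals^n$ and in $\F_2^n$) we have $\bar{p}(x) \equiv p(x) \pmod 2$, and since $p(x) = f(x) \in \{0,1\}$ this forces $\bar{p}(x) = f(x)$ in $\F_2$. Hence $\bar{p}$ is a valid $\F_2$-representation of $f$, and $\deg_2(f) \leq \deg(\bar{p}) \leq \deg(f)$.

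There is no real obstacle in this argument; it is a standard change-of-ring computation. The only nontrivial ingredient is the existence and uniqueness of the multilinear representation on the hypercube, which is precisely what allows the Möbius formula for the $c_S$ to produce integers. If one wanted an even shorter proof, one could observe that any $\Z$-linear combination of indicator polynomials $\prod_{i \in S} x_i \prod_{i \notin S} (1-x_i)$ expressing $f$ automatically reduces mod $2$ to an $\F_2$-polynomial of no larger degree.
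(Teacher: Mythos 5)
Your argument is correct and is essentially the standard proof: the paper does not reprove this lemma but simply cites O'Donnell~\cite[Proposition 6.23]{ODo09}, and the argument there is exactly the one you give (the unique multilinear real representation has integer coefficients by M\"obius inversion, so reducing the coefficients mod~2 yields an $\F_2$-polynomial of no larger degree that still agrees with $f$ on $\B^n$). Nothing to add.
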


Combining these with \Cref{thm:deg vs. Q}, we get that all nontrivial monotone graph properties $f$ satisfy $\Q(f) = \Omega(n)$, which is the statement of \Cref{thm:qAKR}.

\section{Open questions}\label{sec:open}

We saw that $\lambda(f)$ lower-bounds both $\Adv(f)$, and thus $\Q(f)$, and also the sensitivity $\s(f)$. 
One might conjecture that $\lambda(f)$ lower-bounds all the complexity measures in \Cref{fig:rel}, including $\tilde{\deg}(f)$.
\begin{conjecture}\label{conj:adeg}
	For all Boolean functions $f:\B^n \to \B$, we have $\lambda(f) = O(\tilde{\deg}(f))$.
\end{conjecture}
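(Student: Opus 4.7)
The conjecture would sharpen \Cref{lemma:Q vs rho}'s bound $\lambda(f) = O(\Q(f))$ to use the potentially much smaller approximate degree, and would be strictly stronger in general. My natural plan is to exploit the LP-dual characterization of approximate degree: $\tilde{\deg}(f) \geq d$ iff there exists a signed measure $\psi : \B^n \to \Reals$ with $\sum_x |\psi(x)| \leq 1$, $\sum_x \psi(x) f(x) > 1/3$, and $\hat\psi(S) = 0$ for every $S$ with $|S| < d$. The goal is then to construct such a dual witness with pure high degree $\gtrsim \lambda(f)$ out of the top eigenvector of $A_f$.

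First I would set up the candidate. By Perron--Frobenius I can take the top eigenvector $v \geq 0$. Since $G_f$ is bipartite with parts $V_0 = f^{-1}(0)$ and $V_1 = f^{-1}(1)$, the eigenvector splits as $v = v_0 + v_1$, and the eigenvalue equation $A_f v = \lambda(f) v$ forces $A_f v_0 = \lambda(f) v_1$ and $A_f v_1 = \lambda(f) v_0$, hence $\|v_0\|_2 = \|v_1\|_2$. The natural dual-witness candidate is then $\psi(x) \propto (2f(x)-1)\, v(x)$, or the variant $\psi(x) \propto (2f(x)-1)\, v(x)^2$; both have the correct sign pattern, and the $L^2$-balance between $V_0$ and $V_1$ should translate into $\sum_x \psi(x) f(x)$ of constant magnitude (most cleanly for the squared variant, where $v(x)^2$ furnishes a genuine probability distribution after normalization).

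The main obstacle is the pure-high-degree condition $\hat\psi(S) = 0$ for $|S| < \lambda(f)$: the Perron eigenvector of $A_f$ carries no obvious Fourier structure, so a priori $\psi$ can have arbitrary low-degree mass. The standard workaround is to replace $\psi$ by $\psi - \Pi_{<\lambda(f)}\psi$, where $\Pi_{<d}$ projects onto degree-$<d$ functions; one then must show that the subtracted piece has small correlation with $f$ and only a modest $\ell_1$-norm. Bounding the low-degree Fourier mass of the top eigenvector seems to require structural information beyond the eigenvalue equation, and this is where I expect the proof to get stuck.

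A fallback is a primal route: take a $1/3$-approximating polynomial $p$ of degree $d = \tilde{\deg}(f)$, and combine the Dirichlet-form inequality $\sum_S |S|\, \hat p(S)^2 \leq d\, \|p\|_2^2$ with the edge-gradient lower bound $|p(x)-p(y)|\geq 1/3$ on $(x,y)\in E(G_f)$. This yields $|E(G_f)| = O(d \cdot 2^n)$, which is only an average-degree bound; converting it into a bound on $\|A_f\|$ requires a localization argument, since the top eigenvalue can be quadratically larger than the average degree. Verifying the conjecture on structured classes of functions (symmetric, monotone, or composed) seems a reasonable intermediate target before attacking the general case.
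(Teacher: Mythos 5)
This statement is Conjecture~\ref{conj:adeg}, not a theorem: the paper has no proof of it and explicitly presents it as an open problem. Indeed, as the paper notes, the conjecture together with \Cref{lemma:Huang} would imply $\deg(f) = O(\tilde{\deg}(f)^2)$, a longstanding open problem of Nisan and Szegedy for which the best known bound is a sixth power. So there is no argument in the paper to compare against, and a complete proof would be a significant advance.

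Your writeup correctly recognizes this: you sketch two plausible lines of attack (a dual-witness construction from the Perron eigenvector of $A_f$, and a primal Dirichlet-form route), but you honestly flag the step where each stalls. On the dual side, you are right that the Perron eigenvector has no a priori control over its low-degree Fourier mass, so the projection $\psi - \Pi_{<\lambda(f)}\psi$ could destroy the correlation with $f$; without some structural input about eigenvectors of sensitivity graphs, this does not go through. On the primal side, the Dirichlet-form argument gives $|E(G_f)| = O(\tilde{\deg}(f) \cdot 2^n)$, i.e.\ average sensitivity $\E_x[\s_x(f)] = O(\tilde{\deg}(f))$, which is consistent with the paper's observation $\lambda(f)\ge \E_x[\s_x(f)]$ but says nothing in the needed direction: $\|A_f\|$ is not controlled by average degree. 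The ``localization'' you ask for is exactly the missing ingredient, and no such argument is known. In short, you have not proved the statement, but you correctly identified that it is open and where the natural approaches break down; that is the honest and appropriate conclusion here.
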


If \Cref{conj:adeg} we true, \Cref{lemma:Huang} would imply that $\deg(f) = O(\tilde{\deg}(f)^2)$, settling a longstanding conjecture posed by Nisan and Szegedy~\cite{NS94}. The current best relation between the two measures is $\deg(f) = O(\tilde{\deg}(f)^6)$.
The following conjecture is weaker, and might be easier to tackle first.
\begin{conjecture}\label{conj:deg}
	For all Boolean functions $f:\B^n \to \B$, we have $\lambda(f) = O(\deg(f))$.
\end{conjecture}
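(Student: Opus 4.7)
The plan is to attack \Cref{conj:deg} via a structural analysis of the sensitivity graph $G_f$ that exploits the Fourier-theoretic structure of low-degree Boolean functions. Combined with Huang's inequality $\deg(f) \le \lambda(f)^2$, a positive resolution would pin $\lambda(f)$ to the interval $[\sqrt{\deg(f)},\, O(\deg(f))]$, which is optimal given that $\lambda(\OR_n) = \sqrt{n}$ while $\lambda(\Parity_n) = n$.

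First I would reduce to the case where $f$ depends on all $n$ variables, since $\lambda(f)$ is junta-invariant: adding a dummy variable simply tensors $A_f$ with the identity. In this regime, the Nisan--Szegedy theorem bounds $n \le d \cdot 2^{d-1}$ where $d = \deg(f)$, and the trivial estimate $\lambda(f) \le \s(f) \le n$ yields only $\lambda(f) \le d \cdot 2^{d-1}$, exponential in $d$. The technical core is then to beat this bound using the polynomial representation. One concrete attempt is the direction-wise decomposition $A_f = \sum_{i=1}^{n} A_f^{(i)}$, where $A_f^{(i)}$ collects the sensitive edges in direction $i$ and is a partial matching with $\|A_f^{(i)}\| \le 1$; using the identity $f(x) - f(x \oplus e_i) = 2 \sum_{S \ni i} \hat{f}(S) \chi_S(x)$, one can try to group spectral mass by Fourier support of $f$ and argue it effectively concentrates on an $O(d)$-dimensional structured part. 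A complementary recursive attack is the block decomposition
\begin{equation*}
A_f \;=\; \begin{pmatrix} A_{f_0} & D \\ D & A_{f_1} \end{pmatrix},
\end{equation*}
where $f_b = f|_{x_n=b}$ and $D$ is diagonal with $D_{xx} = |f_0(x) - f_1(x)|$; combined with the fact that $\deg(f_0 - f_1) \le \deg(f) - 1$ whenever $f$ genuinely depends on $x_n$, this suggests an induction on $\deg$ using a carefully strengthened potential.

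The main obstacle, and the true crux, is closing the exponential gap between the trivial $\lambda(f) \le d \cdot 2^{d-1}$ and the conjectural $\lambda(f) \le O(d)$. Any proof must genuinely distinguish $\lambda(f)$ from $\s(f)$, since the separations summarized in \Cref{tab:sep} show $\s(f) \ge \deg(f)^{1.63-o(1)}$ is possible, so the simple bound $\lambda(f) \le \s(f)$ is generically loose. The naive triangle inequalities available for both decompositions above yield only $\lambda(f) \le n$, so the proof must extract genuine cancellation rather than summing term by term. As fallback strategies, I would (a) settle the conjecture for low degrees $d = 2, 3$ by direct case analysis, hoping to identify the right invariant; (b) study the alternative characterizations of $\lambda(f)$ developed in \Cref{sec:properties} for a dual SDP or variational formulation that interfaces cleanly with polynomial degree; and (c) attempt a route through a stronger adversary-style bound, building on the inequality $\lambda(f) \le \SA(f)$ from \Cref{lemma:Q vs rho} and seeking feasible adversary matrices whose spectral profile is directly controlled by $\deg(f)$.
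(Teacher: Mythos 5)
This is one of the paper's \emph{open problems}, not a theorem: \Cref{conj:deg} appears in \Cref{sec:open} precisely because the authors do not know how to prove it, and they flag it as a weaker target than \Cref{conj:adeg} (which, if true, would settle the longstanding Nisan--Szegedy question of whether $\deg(f) = O(\adeg(f)^2)$). There is therefore no proof in the paper to compare against. Your proposal is honest about this --- you present reconnaissance and attack vectors and explicitly name the unclosed gap --- so the only thing to review is whether any of your claimed observations is wrong, and whether your calibration of the difficulty is accurate.

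The individual observations check out: $\lambda$ is junta-invariant because a dummy variable tensors $A_f$ with $I_2$; the direction-wise decomposition $A_f = \sum_i A_f^{(i)}$ with $\|A_f^{(i)}\| \le 1$ is correct; the block form under restriction is correct, with $D$ the $0/1$ diagonal indicating where $f_0$ and $f_1$ disagree; and $\deg(f_1 - f_0) \le \deg(f) - 1$ holds. But your ``exponential gap'' framing is miscalibrated. You route through the Nisan--Szegedy junta bound $n \le d \cdot 2^{d-1}$ to conclude that the trivial chain $\lambda(f) \le \s(f) \le n$ gives only $\lambda(f) \le d\cdot 2^{d-1}$, but the sharper Nisan--Szegedy bound $\bs(f) \le 2\deg(f)^2$, together with $\s(f) \le \bs(f)$ and $\lambda(f) \le \s(f)$, already yields $\lambda(f) = O(\deg(f)^2)$. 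This is exactly the $(1,2)$ entry in the $\lambda$-row, $\deg$-column of \Cref{tab:sep}. So the gap the conjecture asks you to close is quadratic, not exponential, and the right question is why the $\deg(f)^2$ worst case for sensitivity cannot survive the spectral relaxation from $\s$ down to $\lambda$. Of your fallbacks, (b) is the one most aligned with the machinery the paper actually develops: among the equivalent formulations $\lambda(f) = \Adv_1(f) = \MM_1(f) = \GSA_1(f)$ from \Cref{sec:properties}, the minimax form $\MM_1(f)$ is a minimization problem, so exhibiting a feasible weight scheme whose cost is controlled by a degree-$d$ representation of $f$ is at least a concrete, finitely-checkable target --- though as of this paper no one knows how to build one.
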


Another longstanding open problem is to show a quadratic relation between deterministic query complexity and block sensitivity:
\begin{conjecture}
    For all Boolean functions $f:\B^n \to \B$, we have $\D(f) = O(\bs(f)^2)$.
\end{conjecture}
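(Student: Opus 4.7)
The plan is to combine structural consequences of Huang's theorem with a refined decision-tree simulation. The current best upper bound $\D(f) = O(\bs(f)^3)$ decomposes as $\D(f) \leq \bs(f) \cdot \C(f)$ from \cite{BBCMW01} together with $\C(f) \leq \bs(f) \cdot \s(f) \leq \bs(f)^2$ from \cite{Nisan91}, so the missing factor of $\bs(f)$ must be saved somewhere inside this chain. Since the separation $\C(f) \leq \bs(f)^2$ is itself tight (via $\wedge\circ\vee$), the improvement cannot come from the $\C$-vs-$\bs$ step; it has to come from the $\D$-vs-$\C$ step or by routing around $\C$ entirely.

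My first attempt would be to bypass $\C$ and go through $\deg$. After Huang we have $\deg(f) \leq \s(f)^2 \leq \bs(f)^2$, and Midrijanis's inequality gives $\D(f) \leq \bs(f) \cdot \deg(f) \leq \bs(f)^3$. The goal becomes to prove a strengthening such as $\D(f) = O(\deg(f) \cdot \s(f))$ or $\D(f) = O(\lambda(f) \cdot \deg(f))$, either of which, combined with \Cref{thm:Huang}, yields the conjecture. A natural tool is to use the top eigenvector of the sensitivity graph $A_f$ produced inside the proof of \Cref{thm:Huang} as a weighting on inputs and design a decision tree that at each step halves the weighted mass of surviving inputs using only $O(\s(f))$ queries, with the spectral gap of $A_f$ serving as the progress measure.

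A second, more direct route is to refine the \cite{BBCMW01} algorithm itself. Their argument picks an arbitrary consistent input and queries a full $0$- or $1$-certificate, paying $\C(f)$ per round. I would instead maintain the sensitivity graph of the currently restricted function and, at each step, query a vertex cover of a maximum matching of sensitive edges; such a set has size $O(\bs(f))$, and one hopes that forcing disagreement with the current consistent input on this cover strictly reduces the block sensitivity of the resulting subfunction. If a drop of at least one in $\bs$ can be certified on every branch, induction gives $\D(f) \leq \sum_{k=1}^{\bs(f)} O(k) = O(\bs(f)^2)$.

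The main obstacle is precisely this last step: proving that $O(\bs)$ well-chosen queries strictly lower $\bs$ along every branch. An adaptive adversary can answer queries so as to keep block sensitivity pinned while the query budget burns, and no existing potential function, including the spectral one used by Huang, appears to forbid this; indeed, the analogous question for sensitivity is the sensitivity-versus-block-sensitivity gap, where the best known separation is cubic and the true answer is unknown. I expect any successful proof to require either a new invariant tracking how $\bs$ evolves under adaptive queries, or a structural theorem showing that after $O(\bs(f))$ queries every low-block-sensitivity function collapses to a rigid subclass with $\D = O(\bs)$ — neither of which the current Huang/spectral toolkit seems to deliver.
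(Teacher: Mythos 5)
This statement is posed in the paper as \emph{Conjecture} in the ``Open questions'' section; the paper offers no proof of it, and indeed notes it is a ``longstanding open problem.'' So there is no proof in the paper to compare your attempt against, and your proposal is --- appropriately --- not a proof either: you survey plausible attack routes and then correctly identify why each one stalls. Your accounting is accurate: the best known bound is $\D(f) \leq \bs(f)^3$, obtained either via $\D(f) \leq \bs(f)\C(f) \leq \bs(f)^2 \s(f)$ or via Midrijanis together with Huang, and the $\C(f) \leq \bs(f)^2$ link is tight for $\wedge\circ\vee$, so the missing factor cannot be recovered there. Your two candidate strengthenings ($\D = O(\deg \cdot \s)$ or $\D = O(\lambda \cdot \deg)$) would each suffice, but neither is known, and the decision-tree refinement you sketch --- querying a vertex cover of a maximum matching of sensitive edges and hoping for a certified drop in $\bs$ on every branch --- is exactly where the argument breaks: there is no known potential function that forces block sensitivity to decrease under such queries, and you are right that an adaptive adversary can keep it pinned. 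Your proposal is therefore a sound \emph{discussion of the open problem}, not a proof, and it is consistent with the paper's treatment of the statement as open. The only thing I would flag is that you should make the status unambiguous up front: you are explaining why the conjecture resists current techniques, not establishing it.
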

If this conjecture were true, it would optimally resolve several relationships in \Cref{tab:sep}, and would imply, for example, $\D(f) = O(\R(f)^2))$ and $\D(f) = O(\adeg(f)^4)$.

After settling the best relation between $\D(f)$ and $\Q(f)$, the next pressing question is to settle the best relation between  $\R(f)$ and $\Q(f)$.
Recently, the fourth author~\cite{Tal19} showed a power $8/3$ separation between $\R(f)$ and $\Q(f)$, while the best known relationship is a power $4$ relationship (this work). We conjecture that both these bounds can be improved.

\begin{conjecture}
    For all Boolean functions $f:\B^n \to \B$, we have $\R(f) = O(\Q(f)^3)$.
\end{conjecture}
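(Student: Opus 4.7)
The plan is to mirror the proof of \Cref{thm:D vs. Q}: there, Midrijanis' inequality $\D(f) \leq \bs(f)\deg(f)$ combined with $\bs(f) = O(\Q(f)^2)$ and $\deg(f) = O(\Q(f)^2)$ gave $\D(f) = O(\Q(f)^4)$. To shave the extra factor of $\Q(f)$, the natural move is to replace the exact-degree factor by an approximate-degree factor on the randomized side, since $\adeg(f) \leq 2\Q(f)$ is quadratically tighter than $\deg(f) = O(\Q(f)^2)$. Concretely, if one could establish the randomized, approximate-degree analog of Midrijanis,
\begin{equation}
    \R(f) = O(\bs(f) \cdot \adeg(f)),
\end{equation}
then combining with $\bs(f) = O(\Q(f)^2)$ would immediately yield $\R(f) = O(\Q(f)^3)$.

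Proving such a randomized Midrijanis inequality is itself the crux. The strategy I would try is a top-down decision tree construction: start with a bounded-error approximating polynomial $p$ of degree $d = \adeg(f)$, and recursively query a variable drawn from some distribution over the monomials of $p$, halting once the residual function is determined with high probability by the queries made. Two ingredients would need to work in tandem: each query should either substantially reduce the residual block sensitivity or reveal a certifying bit, bounding the depth by $O(\bs(f) \cdot d)$; and the random choices must accumulate approximation error gracefully so that the overall tree still errs with probability at most $1/3$. An alternative route is to sharpen the Kothari--Thaler inequality $\RC(f) = O(\adeg(f)^2)$ together with the conversion $\R_0(f) = O(\R(f)\,\s(f)\log \RC(f))$, so that one can bootstrap from $\R$ to $\R_0$ with a weaker multiplicative overhead; a third angle is to use the spectral viewpoint of \Cref{sec:mainproof} to design a randomized walk on $G_f$ that exploits $\lambda(f) = O(\Q(f))$ to locate sensitive blocks efficiently.

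The main obstacle is that every known proof of a Midrijanis-style bound crucially uses the fact that the \emph{exact} multilinear representation of $f$ has a canonical top-degree monomial from which a definite ``progress variable'' can be extracted at each step. Approximating polynomials are far more flexible: cancellations and small coefficients destroy any canonical greedy choice, and the most natural randomizations introduce error that accumulates with the depth of the tree. Even the weaker inequality $\R(f) = O(\bs(f) \cdot \adeg(f))$ is a longstanding open problem; it would, for instance, imply $\R(f) = O(\adeg(f)^3)$ via the Nisan--Szegedy bound $\bs(f) = O(\adeg(f)^2)$. If even $\R(f) = \tilde{O}(\bs(f) \cdot \adeg(f))$ were established, we would obtain $\R(f) = \tilde{O}(\Q(f)^3)$, matching the conjecture up to logarithmic factors, which is the most I would realistically hope for as a first step.
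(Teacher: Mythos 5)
This statement is a \emph{conjecture} in the paper; no proof is given, and the problem remains open. Your proposal does not close it either: the entire argument rests on the unproven ``randomized Midrijanis'' inequality $\R(f) = O(\bs(f)\cdot\adeg(f))$, which you yourself flag as a longstanding open problem. Reducing one open problem to another is a legitimate observation, but it is not a proof, so there is a genuine gap --- the crux step is missing.

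A few further cautions on the reduction itself. First, even granting $\R(f) = O(\bs(f)\cdot\adeg(f))$, you would get $\R(f)=O(\Q(f)^3)$ via $\bs(f)=O(\Q(f)^2)$ and $\adeg(f)=O(\Q(f))$, so the arithmetic is fine; the issue is solely that the lemma is unavailable. Second, the sketched top-down construction from an approximating polynomial is exactly where known attempts break: the proof of $\D(f)\le\bs(f)\deg(f)$ leans on the \emph{exact} representing polynomial having a maximum-degree monomial whose variables one can query and then strip off, with the residual function's degree strictly dropping. For an approximating polynomial there is no such monomial with a guaranteed role (coefficients can be tiny and cancel), the residual polynomial after substitution need not approximate the residual function, and the per-step error compounds multiplicatively over the tree depth, so no $1/3$-error bound falls out. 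Your alternative routes (bootstrapping via $\RC(f)=O(\adeg(f)^2)$ and $\R_0(f)=O(\R(f)\,\s(f)\log\RC(f))$, or a random walk on $G_f$) are also only heuristics: the first chain of inequalities points in the wrong direction to upper bound $\R(f)$ by a cube of $\Q(f)$, and the spectral-walk idea has no mechanism for turning $\lambda(f)=O(\Q(f))$ into an algorithmic bound on $\R(f)$. So the proposal is best read as a plausibility argument for why the conjecture might be approachable, not as a proof.
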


\begin{conjecture}\label{conj:cubic separation}
    There exists a Boolean function $f:\B^n \to \B$ such that $\R(f) = \Omega(\Q(f)^3)$.
\end{conjecture}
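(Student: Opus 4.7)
The plan is to construct an explicit total Boolean function $f$ achieving a cubic $\R$-vs-$\Q$ gap via a lift-and-totalize strategy: start from a partial function $g$ with a strong $\R/\Q$ ratio, then apply a cheat-sheet-style composition to convert it into a total function while preserving as much of the separation as possible. The natural candidate for $g$ is a variant of $k$-fold Forrelation, for which $\Q(g) = O(k)$ and $\R(g) = \tilde{\Omega}(n^{1-1/k})$; as $k$ grows, the partial ratio $\R(g)/\Q(g)^{\alpha}$ can approach any desired polynomial exponent $\alpha$, so in principle there is enough room to extract a cubic gap even after paying some cost in the totalization step.

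First I would fix the base function $g$ and a block-size parameter $c$ governing the cheat-sheet construction of Aaronson, Ben-David, and Kothari. The lifted function $g_{\mathrm{CS}}$ forces an algorithm to solve many independent copies of $g$ and then look up the results in a certificate; for such constructions one typically obtains upper bounds of the form $\Q(g_{\mathrm{CS}}) = \tilde{O}(\Q(g)\cdot\sqrt{c})$ together with randomized lower bounds of the form $\R(g_{\mathrm{CS}}) = \tilde{\Omega}(\R(g)\cdot c)$. After substituting in the $k$-Forrelation bounds and optimizing over both $k$ and $c$, the goal is that the exponents combine into a cubic ratio $\R(f) = \Omega(\Q(f)^3)$. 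A secondary approach, worth running in parallel, is to replace Forrelation by a custom base function designed so that the cheat-sheet lifting is tight on both sides simultaneously, e.g.\ a partial function whose every certificate has large randomized block sensitivity relative to its quantum complexity.

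The main obstacle is the delicate balancing of three competing losses: the exponent ratio in the base partial function, the quantum $\sqrt{c}$ overhead from the cheat sheet, and the logarithmic slack in the randomized lower bound. Tal's current best $\R = \tilde{\Omega}(\Q^{8/3})$ separation already optimizes much of this machinery, so pushing to cubic likely requires either a sharper lifting theorem that avoids the $\sqrt{c}$ quantum cost (perhaps by a more structured gadget than a raw address lookup, or by using a quantum algorithm that only partially learns the certificate pointer), or a new base function whose structure is tailored to composition. I expect the randomized lower bound on the final total function to be the binding step: quantum upper bounds on cheat-sheet constructions tend to follow from routine amplitude amplification plus quantum search, whereas randomized lower bounds must rule out clever algorithms that exploit the certificate structure to avoid fully solving each copy of the base function, and any slack in this argument costs exactly the factor that separates $8/3$ from $3$.
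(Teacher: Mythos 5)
The statement you were asked to address is \Cref{conj:cubic separation}, which the paper explicitly leaves as an open conjecture---there is no proof in the paper to compare against. The authors' only comment is a remark that candidate constructions exist based on~\cite{AA18,ABK16,Tal19}, and that the conjecture ``suffices'' from a conjectured bound on the Fourier spectrum of deterministic decision trees from~\cite{Tal19}. Your submission is likewise not a proof but a research sketch that openly stops short of a cubic separation, so there is no gap to fault in the sense of a broken proof step; rather, neither you nor the paper actually establishes the claim.

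As a comparison of directions: your ingredients (Forrelation-type partial functions, cheat-sheet totalization in the ABK framework, Tal's $8/3$ result) are precisely the constructions the paper's remark gestures at, so you are in the right neighborhood. However, the paper points to a more specific and arguably sharper route: the cubic separation reduces to a concrete $\ell_1$-Fourier-growth conjecture for deterministic decision trees, and if that Fourier bound were proved, the existing rorrelation/Forrelation machinery of~\cite{Tal19} would deliver the cubic separation directly, without needing to redesign the cheat-sheet gadget or pay the $\sqrt{c}$ quantum overhead you flag. Your diagnosis that the randomized lower bound is the binding constraint is correct, but the precise bottleneck the paper has in mind is that Fourier conjecture, not the cheat-sheet lookup cost---indeed Tal's $8/3$ construction already uses a totalization tailored to rorrelation rather than a raw ABK cheat sheet, and the gap from $8/3$ to $3$ lives entirely in the decision-tree Fourier analysis on the lower-bound side. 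Also, your stated cheat-sheet trade-off ($\Q$ pays $\sqrt{c}$, $\R$ gains a factor $c$) is a heuristic that does not match the actual ABK bounds, where the quantum cost involves certificate complexity and log factors rather than a clean $\sqrt{c}$; in a genuine attempt one would need to carry those exact expressions through the optimization.
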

We note that there are candidate constructions based on the work of~\cite{AA18, ABK16, Tal19} that are conjectured to satisfy $\Q(f) \ge  \R(f)^{3-o(1)}$. In particular, it suffices to prove a conjectured bound on the Fourier spectrum of deterministic decision trees \cite{Tal19} to prove \cref{conj:cubic separation}.

Finally, for the special case of {\sf monotone} total Boolean functions $f$, Beals et al.~\cite{BBCMW01} already showed in 1998 that $\D(f)=O(\Q(f)^4)$. It would be interesting to know whether this can be improved, perhaps all the way to $\D(f)=O(\Q(f)^2)$.

\bibliographystyle{alphaurl}
\bibliography{bibs}

\appendix

\section{Properties of the measure \texorpdfstring{$\lambda(f)$}{lambda}}
\label{sec:properties}

We show that the measure $\lambda(f)$ satisfies various elegant
properties. First, it can be defined in multiple ways,
one of which was introduced by Koutsoupias
back in 1993 \cite{Kou93}. It also has a formulation
as a special case of the quantum adversary bound and hence can 
be expressed as as a semidefinite program closely related to that
of the quantum adversary bound. Due to this characterization, $\lambda(f)$
can be viewed as both a maximization problem and a minimization
problem. These equivalent formulations are described in \Cref{sec:equivalent}.

Second,
we show that $\lambda(f)\le\sqrt{\s_0(f)\s_1(f)}$, 
which was already observed by Laplante, Lee, and Szegedy~\cite{LLS06}
(though we give a slightly different proof).
Finally, we show lower bounds on $\lambda(f)$ and an optimal quadratic separation between $\lambda(f)$ and $\s(f)$.

\subsection{Equivalent formulations}\label{sec:equivalent}

\begin{theorem}\label{thm:equivalent}
For all Boolean functions $f\colon\B^n\to\B$, we have
\begin{equation}
\lambda(f)=\K(f)=\Adv_1(f)=\Adv_1^{\pm}(f),    
\end{equation}
where the measures $\K(f)$, $\Adv_1(f)$, and $\Adv_1^{\pm}(f)$
are defined below. Furthermore, $\Adv_1(f)$ itself
has several equivalent formulations:
$\Adv_1(f)\coloneqq\SA_1(f)=\SWA_1(f)=\MM_1(f)=\GSA_1(f)$.
\end{theorem}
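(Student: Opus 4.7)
The plan is to handle the chain of equalities in two stages. First, establish that the various adversary-style measures $\SA_1, \SWA_1, \MM_1, \GSA_1$ (and their common value $\Adv_1$) all coincide. Second, identify this common value with $\lambda(f)$, with the negative-weight bound $\Adv_1^{\pm}(f)$, and with Koutsoupias' $\K(f)$. Throughout I interpret the subscript $1$ as constraining the adversary matrix $\Gamma$ to be supported on input pairs $(x,y)$ at Hamming distance exactly $1$, i.e., on edges of the Boolean hypercube.

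For the first stage, the plan is to invoke the classical equivalences $\SA = \SWA = \MM = \GSA = \Adv$ proved by \v{S}palek and Szegedy (with refinements in subsequent work) in the unrestricted setting. Their arguments rely on LP/SDP duality and construct one feasible adversary from another without enlarging the support: given $\Gamma$ feasible for one program, the constructed $\Gamma'$ feasible for another has $\mathrm{supp}(\Gamma') \subseteq \mathrm{supp}(\Gamma)$. Consequently, imposing the extra constraint that the support lies inside the hypercube edges commutes with each reduction, yielding $\SA_1 = \SWA_1 = \MM_1 = \GSA_1 = \Adv_1$.

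For the second stage, recall from the proof of \Cref{lemma:Q vs rho} that
\begin{equation}
\SA_1(f) = \sup_{\Gamma \ge 0,\; \mathrm{supp}(\Gamma)\subseteq E(G_f)} \frac{\|\Gamma\|}{\max_{i\in[n]} \|\Gamma \circ D_i\|}.
\end{equation}
The crucial observation is that $\Gamma \circ D_i$ picks out only those edges in direction $i$, which form a weighted perfect matching of the hypercube; the spectral norm of such a matching equals its maximum entry. After normalizing $\max_{x,y}\Gamma[x,y] = 1$, the constraint $\max_i \|\Gamma \circ D_i\| \le 1$ becomes $0 \le \Gamma \le A_f$ entrywise. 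By Perron--Frobenius, $\|\Gamma\| \le \|A_f\| = \lambda(f)$, with equality at $\Gamma = A_f$, so $\SA_1(f) = \lambda(f)$. For $\Adv_1^{\pm}(f)$ the same program is solved but with $\Gamma$ allowed to have negative entries; then $\|\Gamma\| \le \|\,|\Gamma|\,\| \le \lambda(f)$ by Perron--Frobenius applied to $|\Gamma| \le A_f$, and feasibility of $\Gamma = A_f$ gives the matching lower bound. Finally, Koutsoupias' measure $\K(f)$ was introduced in 1993 precisely as the spectral norm of a sensitivity-style graph matrix, so once the definition is unpacked one gets $\K(f) = \lambda(f)$ immediately.

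The main obstacle I anticipate is the bookkeeping in stage one: each of the reductions between $\SA, \SWA, \MM, \GSA$, and $\Adv$ must be checked to preserve hypercube-edge support. Most of these preserve support by construction (e.g., $\MM \to \SA$ via rank-one rounding and $\GSA \to \SA$ via restricting the dual witness), but verifying this carefully for each direction takes some patience. The linear algebra in stage two, by contrast, is essentially a one-line consequence of Perron--Frobenius once the structure of $\Gamma \circ D_i$ as a weighted matching is observed.
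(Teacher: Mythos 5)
Your proposal follows essentially the same route as the paper: show $\SA_1 = \lambda = \Adv_1^{\pm}$ by exploiting that $\Gamma \circ D_i$ is supported on a (sub-)permutation matrix so its spectral norm is its largest entry, normalize to reduce to $|\Gamma| \le A_f$ entrywise, use monotonicity of the spectral norm on nonnegative matrices plus $\norm{\Gamma} \le \norm{\,|\Gamma|\,}$, identify $\K$ with $\lambda$ via the block structure of $A_f$, and then descend the \v{S}palek--Szegedy chain $\SA = \SWA = \MM = \GSA$ to the single-bit setting. Your Stage 2 argument is correct and matches the paper's.

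Two places where the plan is under-specified, both handled explicitly in the paper. First, for $\K(f) = \lambda(f)$ you do need the small observation that the maximum over $A \subseteq f^{-1}(0)$, $B \subseteq f^{-1}(1)$ is attained at $A = f^{-1}(0)$, $B = f^{-1}(1)$ (submatrices have spectral norm at most that of the full matrix), after which $A_f$ has off-diagonal blocks $Q$ and $Q^{\dagger}$ and $\norm{A_f} = \norm{Q}$; ``unpack the definition'' elides this. Second, ``the reductions preserve support'' is the right intuition for $\SA_1 \leftrightarrow \SWA_1$ and $\SA_1 \leftrightarrow \GSA_1$, where one explicitly builds a new feasible witness from the old and can check that its support sits inside $E(G_f)$. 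But $\GSA_1 = \MM_1$ is strong SDP duality, not a support-preserving map between $\Gamma$-like objects --- $\MM_1$ has no $\Gamma$ at all, only a weight function $w(x,i)$. One must write down the single-bit primal and dual SDPs (with the single-bit $A_f$ in the objective and a constraint $\sum_i R_i \circ D_i \ge A_f$ that only bites on $E(G_f)$) and verify Slater's condition; the paper does this with $Z = \epsilon A_f$, $\Delta = I/|\Dom(f)|$. None of this changes your approach, but the bookkeeping you anticipate is a bit more than tracking supports.
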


We now define all these measures before proving this theorem.

\paragraph{Koutsoupias complexity $\K(f)$.}
For a Boolean function $f$, let $A\subseteq f^{-1}(0)$,
and let $B\subseteq f^{-1}(1)$. Let $Q$ be the matrix
with rows and columns labeled by $A$ and $B$ respectively,
with $Q[x,y]=1$ if the Hamming distance of $x$ and $y$ is $1$,
and $Q[x,y]=0$ otherwise. Koutsoupias \cite{Kou93}
observed that $\|Q\|^2$ is a lower bound on formula size,
for every such choice of $A$ and $B$. We define
$\K(f)$ to be the maximum value of $\|Q\|$ over choices
of $A$ and $B$.
Thus $\K(f)^2$ is a lower bound on the formula size of $f$.

\paragraph{Single-bit positive adversary $\Adv_1(f)$.}
We define $\Adv_1(f)$ as a version of the adversary bound
where we are only allowed to put nonzero weight on input pairs
$(x,y)$ where $f(x)\ne f(y)$ and the Hamming distance between
$x$ and $y$ is exactly $1$. We will define $\Adv_1(f)$
in terms of the spectral adversary version, which we
also denote by $\SA_1(f)$.
$\Adv_1(f) = \SA_1(f)$ is defined as the maximum of
\begin{equation}
\frac{\|\Gamma\|}{\max_{i\in[n]}\|\Gamma\circ D_i\|}    
\end{equation}
over matrices $\Gamma$ of a special form. We require $\Gamma$
satisfy the following: (1) its entries are nonnegative reals;
(2) its rows and columns are indexed by $\Dom(f)$;
(3) $\Gamma[x,y]=0$ whenever $f(x)=f(y)$; 
(4) $\Gamma[x,y]=0$ whenever the Hamming distance of $x$ and $y$
is not $1$; and (5) $\Gamma$ is not all $0$.
In the above expression, $\circ$ refers to the Hadamard
(entrywise) product, $\Dom(f)$ is the domain of $f$, 
and $D_i$ is the $\B$-valued matrix with $D_i[x,y]=1$
if and only if $x_i\ne y_i$.

\paragraph{Single-bit negative adversary $\Adv_1^{\pm}(f)$.}
We define $\Adv_1^{\pm}(f)$ using the same definition as
$\Adv_1(f)$ above, except that the matrix $\Gamma$ is allowed
to have negative entries. Note that since
this is a relaxation of the
conditions on $\Gamma$, we clearly have
$\Adv_1^\pm(f)\ge\Adv_1(f)$.

\paragraph{Single-bit strong weighted adversary $\SWA_1(f)$.}
We define $\SWA_1(f)$ as a single-bit version of the
strong weighted adversary method $\SWA(f)$ from \cite{SSpalekS06}.
For this definition, we say a weight function
$w\colon\Dom(f)\times\Dom(f)\to[0,\infty)$ is feasible
if it is symmetric (i.e., $w(x,y)=w(y,x)$) and
if it satisfies the conditions on $\Gamma$ above
(i.e., it places weight $0$ on a pair $(x,y)$
unless both $f(x)\ne f(y)$ and the Hamming distance
between $x$ and $y$ is $1$).
We view such a feasible weight scheme $w$ as the weights
on a weighted bipartite graph, where the left vertex set is
$f^{-1}(0)$ and the right vertex set is $f^{-1}(1)$.
We let $wt(x)\coloneqq\sum_y w(x,y)$ denote the weighted degree of
$x$ in this graph, i.e., the sum of the weights of its incident
edges. Then $\SWA_1(f)$ is defined as the maximum, over
such feasible weight schemes $w$, of
\begin{equation}
\min_{x,i:w(x,x^i)>0}\frac{\sqrt{wt(x)wt(x^i)}}{w(x,x^i)}.    
\end{equation}
Here $x$ ranges over $\Dom(f)$, $i$ ranges over $[n]$, and
$x^i$ denotes the string $x$ with bit $i$ flipped.\footnote{%
Readers familiar with the adversary bound should note that
this definition is analogous a weighted version
of Ambainis's original adversary method; in the original method,
the denominator was the geometric mean of (a) the weight
of the neighbors of $x$ with disagree with $x$ at $i$,
and (b) the weight of the neighbors of $x^i$ which disagree
with $x^i$ at $i$; but in our case, both (a) and (b) are simply
$w(x,x^i)$, since $x^i$ is the only string that disagrees
with $x$ on bit $i$ and is connected to $x$ in the
bipartite graph.}

\paragraph{Single-bit minimax adversary $\MM_1(f)$.}
Unlike the other forms, we define $\MM_1(f)$ as a minimization
problem rather than a maximization problem. We
say a weight function $w\colon\Dom(f)\times[n]\to[0,\infty)$
is feasible if for all $x,y\in\Dom(f)$ with
$f(x)\ne f(y)$ and Hamming distance $1$, we have
$w(x,i)w(y,i)\ge 1$, where $i$ is the bit on which
$x$ and $y$ disagree. $\MM_1(f)$ is defined as the minimum,
over such feasible weight schemes $w$, of
\begin{equation}
    \max_{x\in\Dom(f)}\sum_{i\in[n]} w(x,i).
\end{equation}

\paragraph{Semidefinite program version $\GSA_1(f)$.}
We define $\GSA_1(f)$ to be the optimal value of the following
semidefinite program.
\begin{equation}
\begin{array}{lll}
\text{maximize}  &  \langle Z,A_f\rangle &\\
\text{subject to}& \Delta\mbox{ is diagonal} & \\
                & \tr\Delta = 1 & \\
                & \Delta-Z\circ D_i\succeq 0 & \forall i\in[n]\\
                & Z \ge 0 &
\end{array}
\end{equation}
Here $Z$ and $\Delta$ are variable matrices with rows
and columns indexed by $\Dom(f)$, $A_f$ is the $\B$-matrix with
$A_f[x,y]=1$ if and only if both $f(x)\ne f(y)$
and $(x,y)$ have Hamming distance $1$, and $D_i$
is the $\B$-matrix with $D_i[x,i]=1$ if and only if $x_i\ne y_i$.

We now prove \Cref{thm:equivalent}.

\begin{proof}
Recall that in the definition of $\K(f)$,
we picked $A\subseteq f^{-1}(0)$ and $B\subseteq f^{-1}(1)$
and defined the resulting matrix $Q$. Since the spectral norm
of a submatrix is always smaller than or equal to the spectral
norm of the original matrix, we can always assume without
loss of generality that $A=f^{-1}(0)$ and $B=f^{-1}(1)$.
Then $\K(f)=\|Q\|$ for the resulting matrix $Q$ with rows
and columns indexed by $f^{-1}(1)$ and $f^{-1}(0)$ respectively.
Now, recall that $A_f$ was the adjacency matrix of the graph
$G_f$, which has an edge between $x$ and $y$ if $f(x)\ne f(y)$
and the Hamming weight between $x$ and $y$ is $1$.
The rows and columns of $A_f$ are each indexed by $\Dom(f)$.
By rearranging them, we can make $A_f$ be block diagonal
with blocks equal to $Q$ and $Q^\dagger$. From there it
follows that $\|A_f\|=\|Q\|$, so $\lambda(f)=\K(f)$.

Next, recall that $\Adv_1(f)$ is defined as the maximum
ratio $\|\Gamma\|/\max_i\|\Gamma\circ D_i\|$ over valid
choices of $\Gamma$. Note that since $\Gamma[x,y]$
can only be nonzero if $x$ and $y$ disagree on one bit,
$\Gamma\circ D_i$ is nonzero only on pairs $(x,y)$
which disagree exactly on bit $i$. In other words,
if $P_i$ denotes the $\B$-valued matrix with $P_i[x,y]=1$
if and only if $x$ and $y$ disagree on bit $i$ and only on $i$,
then $\Gamma\circ D_i$ is nonzero only in entries where $P_i$
is $1$. Now, note that $P_i$ is a permutation matrix.
Hence, by rearranging the rows and columns of $\Gamma\circ D_i$,
we can get it to be diagonal. This means $\|\Gamma\circ D_i\|$
is the maximum entry of $\Gamma\circ D_i$, and hence
$\max_i\|\Gamma\circ D_i\|$ is the maximum entry of $\Gamma$.
It follows that $\Adv_1(f)$ is the maximum of $\|\Gamma\|$
over feasible matrices $\Gamma$ with $\max(\Gamma)\le 1$,
where $\max(\Gamma)=\max_{ij}|\Gamma_{ij}|$. This argument also holds for
$\Adv_1^\pm(f)$, which is the maximum of $\|\Gamma\|$
over feasible (possibly negative) matrices $\Gamma$
with $\max(\Gamma)\le 1$.

Next, observe that negative weights never help for maximizing
$\|\Gamma\|$: indeed, if we had $\Gamma$ with negative entries
maximizing $\|\Gamma\|$, then we would have vectors $u$ and $v$
with $\|u\|_2=\|v\|_2=1$ and $u^T\Gamma v=\|\Gamma\|$;
but then replacing $u$ and $v$ with their entry-wise absolute
values, and replacing $\Gamma$ with its entry-wise absolute
value $\Gamma'$, we clearly get that $\|\Gamma'\|\ge\|\Gamma\|$.
However, $\max(\Gamma')=\max(\Gamma)$, so $\Gamma'$
remains feasible. This means we can always take the maximizing
matrix $\Gamma$ to be nonnegative, so $\Adv_1^\pm(f)=\Adv_1(f)$.
We can similarly assume that the unit vectors $u$ and $v$
maximizing $u^T\Gamma v$ are nonnegative.

Finally, consider the maximizing matrix $\Gamma$ and the
maximizing unit vectors $u$ and $v$, all nonnegative,
and satisfying $\max(\Gamma)\le 1$. Note that
the expression $u^T\Gamma v$ is nondecreasing in the entries
of $\Gamma$, since everything is nonnegative. Hence
to maximize $u^T\Gamma v$, we can always take every nonzero
entry of $\Gamma$ to be $1$, since this maintains
$\max(\Gamma)\le 1$. In other words, the matrix maximizing
$\|\Gamma\|$ will always simply be $A_f$, and hence
$\Adv_1(f)$ is always exactly equal to $\lambda(f)$.

It remains to show that $\SA_1(f)=\SWA_1(f)=\MM_1(f)=\GSA_1(f)$.
The proof of this essentially
follows the arguments in \cite{SSpalekS06} for the regular
positive adversary, though some steps are a little simpler.
To start, we've seen that $\SA_1(f)=\lambda(f)$. Since
$A_f$ is symmetric, we have $\lambda(f)=v^T A_f v$ for
some unit vector $v$, which we've established is nonnegative;
this vector is also an eigenvector, so $A_f v=\lambda(f)v$.
Consider the weight scheme $w(x,y)=v[x]v[y]A_f[x,y]$. Then
$wt(x)=\sum_y v[x]v[y]A_f[x,y]=v[x](A_f v)[x]=\lambda(f)v[x]^2$.
Hence if $w(x,x^i)>0$, we have
\begin{equation}
    \frac{\sqrt{wt(x)wt(x^i)}}{w(x,x^i)}=\frac{\lambda(f)v[x]v[x^i]}{v[x]v[x^i]A_f[x,x^i]}=\lambda(f).
\end{equation}
This means $\SWA_1(f)\ge\SA_1(f)$.
In the other direction, let $w$ be a feasible weight scheme
for $\SWA_1(f)$, let $\Gamma[x,y]=w(x,y)/\sqrt{wt(x)wt(y)}$,
and let $v[x]=\sqrt{wt(x)/W}$, where $W=\sum_x wt(x)$.
Then $\|v\|_2^2=\sum_x wt(x)/W=1$, and
\begin{equation}
    v^T\Gamma v
=\sum_{x,y} \sqrt{wt(x)wt(y)}w(x,y)/W\sqrt{wt(x)wt(y)}
=(1/W)\sum_{x,y}w(x,y)=1.
\end{equation}
Hence $\|\Gamma\|\ge 1$. On the other hand, we have
$\max(\Gamma)=\max_{x,y} w(x,y)/\sqrt{wt(x)wt(y)}$.
This means that the ratio $\|\Gamma\|/\max(\Gamma)$
equals $\min_{x,y:w(x,y)>0}\sqrt{wt(x)wt(y)}/w(x,y)$,
which is $\SWA_1(f)$; thus $\SA_1(f)\ge\SWA_1(f)$.

Next we examine $\GSA_1(f)$. Consider a solution
$(Z,\Delta)$ to this semidefinite
program and define $\Gamma=Z\circ M\circ A_f$,
where $M$ is defined as $M=uu^T$ and $u$ is defined by
$u[x]=1/\sqrt{\Delta[x,x]}$ when $\Delta[x,x]>0$
and $u[x]=0$ otherwise. Recall that $\Delta$ is diagonal
and that $\Delta-Z\circ D_i\succeq 0$ for all $i$.
Since positive semidefinite matrices are symmetric,
$Z\circ D_i$ must be symmetric for all $i$, so $Z$
is symmetric. Moreover, the diagonal of $Z\circ D_i$
is all zeros, so we must have $\Delta\ge 0$.
Further, if $\Delta[x,x]=0$ for some $x$, we must have
the corresponding row and column of $Z$ be all zeros.
If we let $\Delta'$ and $Z'$ be $\Delta$ and $Z$ with the
all-zero rows and columns deleted, then it is clear that
$\Delta-Z\circ D_i\succeq 0$ if and only if
$\Delta'-Z'\circ D_i\succeq 0$. Defining $M'$ as $M$
with those rows and columns deleted and $u'$ as $u$ with
those entries deleted, we have $M'=u'(u')^T>0$.
Observe that $\Delta'-Z'\circ D_i\succeq 0$
if and only if $v^T(\Delta'-Z'\circ D_i)v\ge 0$ for all
vectors $v$, which is if and only if
$(v\circ u')^T(\Delta'-Z'\circ D_i)(v\circ u')\ge 0$
for all vectors $v$ (since we have $u'>0$). This, in turn,
is equivalent to $M'\circ (\Delta'-Z'\circ D_i)\succeq 0$.
Since $M'\circ \Delta'=I$, this is equivalent to
$I-M'\circ Z'\circ D_i\succeq 0$,
which is in turn equivalent to $I-M\circ Z\circ D_i\succeq 0$.
Since $Z\ge 0$ and we are maximizing $\langle Z,A_f\rangle$,
it never helps for $Z$ to have nonzero entries in places
where $A_f$ is $0$. Hence we can assume without loss of generality
that $Z=Z\circ A_f$, which means the constraint becomes
$I-\Gamma\circ D_i\succeq 0$, where we defined
$\Gamma=M\circ Z\circ A_f$. We thus have
$\|\Gamma\circ D_i\|\le 1$. On the other hand, letting
$v[x]=\sqrt{\Delta[x,x]}$, we have
\begin{equation}
    v^T\Gamma v=\sum_{x,y}v[x]v[y]M[x,y]Z[x,y]A_f[x,y]
=\sum_{x,y:\Delta[x,x],\Delta[y,y]>0}Z[x,y]A_f[x,y]
=\langle Z,A_f\rangle.
\end{equation}
Hence $\SA_1(f)\ge\GSA_1(f)$.
The reduction in the other direction works similarly:
start with an adversary matrix $\Gamma$ with
$\max(\Gamma)\le 1$, and let $v$ be its
principle eigenvector. Then set $Z=\Gamma\circ (vv^T)$
and $\Delta=I\circ (vv^T)$. Then $I-\Gamma\circ D_i\succeq 0$,
which implies that $\Delta-Z\circ D_i\succeq 0$.
We also have $\tr\Delta=1$, $Z\ge 0$, and
$\langle Z,A_f\rangle=\|\Gamma\|$.

Finally, we handle $\MM_1(f)$. To do so, we first take the
dual of the semidefinite program for $\GSA_1(f)$.
This dual has the form
\begin{equation}
\begin{array}{lll}
\text{minimize}  &  \alpha &\\
\text{subject to}& \sum_i R_i\circ I\le\alpha I & \\
                & \sum_i R_i\circ D_i\ge A_f & \\
                & R_i\succeq 0 & \forall i\in[n]
\end{array}
\end{equation}
where the variables are $\alpha$ (a scalar) and matrices
$R_i$, each with rows and columns indexed by $\Dom(f)$.
Strong duality follows since when
$A_f$ is not all zeros, and the semidefinite program
in $\GSA_1(f)$ has a strictly feasible solution
(just take $Z$ to equal $\epsilon A_f$ for a small enough
positive constant $\epsilon$, and take $\Delta=I/|\Dom(f)|$).
This means the optimal solution of the minimization
problem above equals $\Adv_1(f)$. It remains
to show that this optimal solution $T$ also equals
$\MM_1(f)$.

Let $\alpha$ and $\{R_i\}_i$ be a feasible solution to
the semidefinite program above. Since $R_i\succeq 0$,
we have $R_i=X_iX_i^T$ for some matrix $X_i$.
Define $w(x,i)=R_i[x,x]$.
Note that we also have $w(x,i)=\sum_a X_i[x,a]^2$.
Then by Cauchy--Schwarz,
$w(x,i)w(y,i)\ge
\left(\sum_a X_i[x,a]X_i[y,a]\right)^2
=(X_iX_i^T)[x,y]^2=R_i[x,y]^2$.
If $x$ and $y$ are such that $A_f[x,y]=1$, then they
disagree in only one bit $i$, and hence $D_i[x,y]=1$
for that $i$ and $D_j[x,y]=0$ for all $j\ne i$.
Since we have $\sum_i R_i\circ D_i\ge A_f$,
we conclude that for all such pairs $(x,y)$,
we have $w(x,i)w(y,i)\ge R_i[x,y]^2\ge A_f[x,y]^2=1$
on the bit $i$ where $x$ and $y$ differ; hence
the weight scheme $w$ is feasible.
Furthermore, for any $x$,
$\sum_i w(x,i)=\sum_i R_i[x,x]\le\alpha I[x,x]=\alpha$.
Hence $\MM_1(f)$ is at most the optimal value of this
semidefinite program.

In the other direction, consider a feasible weight scheme
$w$, and define $R_i[x,y]=\sqrt{w(x,i)w(y,i)}$.
Then $R_i=w(\cdot,i)w(\cdot,i)^T$, where we treat
$w(\cdot,i)$ as a vector; hence $R_i\succeq 0$.
Moreover, $R_i\ge 0$, and for a pair $(x,y)$
with $A_f[x,y]=1$, there is some $i$ which is the unique
bit they disagree on, and hence $w(x,i)w(y,i)\ge 1$;
but this means that $R_i[x,y]\ge 1$, and so
$(R_i\cdot D_i)[x,y]\ge 1=A_f[x,y]$.
Finally, $\sum_i R_i[x,x]=\sum_i w(x,i)$,
which means that $\sum_i R_i\circ I\le \MM_1(f)\cdot I$, as desired.
\end{proof}

\subsection{Upper bounds}

We now show a slightly better upper bound on $\lambda(f)$, that it is upper bounded by the geometric mean of the 0-sensitivity and 1-sensitivity, which can be a better upper bound than $\s(f)$. 

We provide two proofs of this. The first uses the $\lambda(f)$ formulation and uses a linear algebra argument about norms. This proof is due to Laplante, Lee, and Szegedy~\cite{LLS06}, who observed this about the measure $\K(f)$. 

To describe this proof, we briefly need to describe some matrix norms.
For a vector $v \in \Reals^n$, the $p$-norm for a positive integer $p$ is defined as $\norm{v}_p=(\sum_{i\in[n]}|v_i|^p)^{1/p}$. We also define $\norm{v}_\infty = \max_{i \in [n]} |v_i|$.
Note that $\norm{v}_1$ is simply the sum of the absolute values of all the entries of the vector.

Similarly, for a matrix $A \in \Reals^{n\times m}$, we define the induced $p$-norm of $A$ to be
\begin{equation}
    \norm{A}_p = \max\{\norm{Ax}_p: \norm{x}_p =1\}.
\end{equation}
The spectral norm $\norm{A}$ is the induced $2$-norm $\norm{A}_2$.
The 1-norm $\norm{A}_1$ is simply the maximum sum of absolute values of entries in any column of the matrix. The $\infty$-norm $\norm{A}_\infty$ is  the maximum sum of absolute values of entries in any row of the matrix. 

Lastly, we need a useful relationship between these norms sometimes called  H\"{o}lder's inequality for induced matrix norms (see \cite[Corollary 2.3.2]{GV13} for a proof): 
\begin{proposition}\label{prop:Holder}
For all matrices $A \in \Reals^{n\times m}$, we have $\norm{A} \leq \sqrt{\norm{A}_1 \norm{A}_\infty}$.
\end{proposition}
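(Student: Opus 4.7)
The plan is to give a direct Cauchy--Schwarz proof rather than invoking interpolation machinery. I would fix an arbitrary vector $x \in \Reals^m$ with $\norm{x}_2 = 1$ and aim to show $\norm{Ax}_2^2 \leq \norm{A}_1 \norm{A}_\infty$, which by taking the supremum over such $x$ yields the desired bound $\norm{A} \leq \sqrt{\norm{A}_1 \norm{A}_\infty}$.

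The main computation is to expand $\norm{Ax}_2^2 = \sum_i |\sum_j A_{ij} x_j|^2$ and apply Cauchy--Schwarz to each inner sum by splitting $A_{ij} x_j$ as $\sqrt{|A_{ij}|} \cdot \sqrt{|A_{ij}|}\, x_j$. This gives the pointwise bound $|\sum_j A_{ij} x_j|^2 \leq (\sum_j |A_{ij}|)(\sum_j |A_{ij}|\, |x_j|^2)$. The first factor is at most $\norm{A}_\infty$ since it is the $\ell_1$ sum of row $i$ (which by the characterization of $\norm{A}_\infty$ as the maximum absolute row sum is bounded by $\norm{A}_\infty$). Pulling this factor out of the outer sum, one gets $\norm{Ax}_2^2 \leq \norm{A}_\infty \sum_j |x_j|^2 \sum_i |A_{ij}|$, and then the inner sum $\sum_i |A_{ij}|$ is at most $\norm{A}_1$ (the maximum absolute column sum), leaving $\norm{A}_\infty \norm{A}_1 \sum_j |x_j|^2 = \norm{A}_\infty \norm{A}_1$.

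I do not expect a genuine obstacle here: the statement is a standard inequality, and the only place where care is needed is in correctly identifying the row-sum and column-sum characterizations of $\norm{A}_\infty$ and $\norm{A}_1$, which the paper has already stated just above the proposition. If one wants a more structural proof, an alternative route is to observe $\norm{A}^2 = \norm{A^T A}$ (via singular values) and then bound $\norm{A^T A}$ by $\norm{A^T A}_\infty \leq \norm{A^T}_\infty \norm{A}_\infty = \norm{A}_1 \norm{A}_\infty$, using submultiplicativity of the induced $\infty$-norm and the duality $\norm{A^T}_\infty = \norm{A}_1$; however, this requires justifying that the spectral norm of a PSD matrix is at most any induced norm, so the direct Cauchy--Schwarz argument is cleaner and self-contained.
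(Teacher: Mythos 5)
Your Cauchy--Schwarz argument is correct and complete. Writing $A_{ij}x_j = \sqrt{|A_{ij}|}\cdot\bigl(\tfrac{A_{ij}}{\sqrt{|A_{ij}|}}x_j\bigr)$ (dropping zero terms) and applying Cauchy--Schwarz gives $\bigl|\sum_j A_{ij}x_j\bigr|^2 \le \bigl(\sum_j |A_{ij}|\bigr)\bigl(\sum_j |A_{ij}|\,|x_j|^2\bigr)$; the first factor is at most the maximum absolute row sum $\norm{A}_\infty$, and after exchanging the order of summation the remaining inner sum $\sum_i|A_{ij}|$ is at most the maximum absolute column sum $\norm{A}_1$, yielding $\norm{Ax}_2^2 \le \norm{A}_1\norm{A}_\infty\norm{x}_2^2$ and hence the claim.

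One thing worth flagging: the paper does not actually prove this proposition --- it cites it to \cite[Corollary~2.3.2]{GV13} and moves on. So there is no in-paper proof to compare against; your argument is a legitimate self-contained replacement for that citation, using only Cauchy--Schwarz and the row-sum/column-sum characterizations of the induced $\infty$- and $1$-norms that the paper recalls just above the proposition. Your alternative route via $\norm{A}^2 = \norm{A^T A} \le \norm{A^T A}_\infty \le \norm{A^T}_\infty\norm{A}_\infty = \norm{A}_1\norm{A}_\infty$ is also sound, but it does require the extra lemma that the spectral norm of a symmetric (indeed PSD) matrix is dominated by any induced operator norm, i.e.\ that the spectral radius is at most any submultiplicative norm; since supplying that lemma is roughly the work you were trying to avoid, the direct Cauchy--Schwarz argument you lead with is indeed the cleaner choice.
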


We can now prove the upper bound:

\begin{lemma}
For all (possibly partial) functions $f$, we have
$\lambda(f)\le\sqrt{\s_0(f)\s_1(f)}$.
\end{lemma}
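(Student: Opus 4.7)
The plan is to exploit the bipartite structure of the sensitivity graph $G_f$ together with \Cref{prop:Holder}. Since every edge of $G_f$ goes between $f^{-1}(0)$ and $f^{-1}(1)$, after reordering rows and columns the adjacency matrix $A_f$ takes the block form $\begin{pmatrix} 0 & Q \\ Q^T & 0 \end{pmatrix}$, where $Q$ has rows indexed by $f^{-1}(0)$ and columns indexed by $f^{-1}(1)$ with $Q[x,y]=1$ iff $x$ and $y$ differ on exactly one bit (and hence $f(x)\ne f(y)$). The first step is to observe that $\lambda(f) = \|A_f\| = \|Q\|$, since the nonzero singular values of $A_f$ are exactly (two copies of) the singular values of $Q$.

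Next I would apply \Cref{prop:Holder} to $Q$, giving $\|Q\| \le \sqrt{\|Q\|_1 \cdot \|Q\|_\infty}$. The second step is to interpret each factor combinatorially. The row of $Q$ indexed by $x\in f^{-1}(0)$ has a $1$ in column $y$ precisely for each Hamming neighbor $y$ of $x$ with $f(y)=1$; thus the row sum of absolute values equals the number of sensitive coordinates of $x$, which is $\s_x(f)$. Taking the max over $x\in f^{-1}(0)$ yields $\|Q\|_\infty \le \s_0(f)$. An identical argument applied to columns of $Q$ (equivalently, rows of $Q^T$) gives $\|Q\|_1 \le \s_1(f)$.

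Putting the pieces together,
\begin{equation}
\lambda(f) \;=\; \|Q\| \;\le\; \sqrt{\|Q\|_1\,\|Q\|_\infty} \;\le\; \sqrt{\s_0(f)\,\s_1(f)},
\end{equation}
which is the desired bound. I do not expect a real obstacle here: every step is a direct application of either the bipartite block-matrix decomposition, the stated H\"older inequality for induced matrix norms, or the definition of sensitivity. If anything, the only point requiring a moment of care is verifying that the bipartite decomposition preserves the spectral norm (so that passing from $A_f$ to the single off-diagonal block $Q$ is lossless), but this is immediate from the singular value decomposition of off-diagonal block matrices. The proof works verbatim in the partial-function setting because $\|Q\|$, $\|Q\|_1$, and $\|Q\|_\infty$ are defined purely in terms of $\Dom(f)$, and sensitivity at a point $x$ only counts neighbors $y\in\Dom(f)$ with $f(y)\ne f(x)$.
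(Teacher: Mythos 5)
Your proof is correct and follows essentially the same route as the paper's first proof of this lemma: reorder rows and columns to expose the bipartite block structure of $A_f$, reduce to the off-diagonal block, apply \Cref{prop:Holder}, and read off the row and column sums as $\s_0(f)$ and $\s_1(f)$. The only cosmetic difference is that you spell out the singular-value justification for $\|A_f\|=\|Q\|$, which the paper leaves implicit.
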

\begin{proof}
We know that $\lambda(f) = \norm{A_f}$ and $A_f$ is a matrix of the form $\Bigl(\begin{smallmatrix}0& B\\ B^T& 0 \end{smallmatrix}\Bigr)$ if we rearrange the rows and columns so that all $0$-inputs come first and are followed by $1$-inputs, since $A_f$ only connects inputs with different $f$-values. Thus we have
\begin{equation}
    \lambda(f) = \norm{A_f} = \norm{B} \leq \sqrt{\norm{B}_1\norm{B_\infty}} = \sqrt{\s_0(f)\s_1(f)},
\end{equation}
where we used H\"{o}lder's inequality (\Cref{prop:Holder}) and the fact that the maximum row and column sum of $B$ are precisely $\s_0(f)$ and $\s_1(f)$, respectively.
\end{proof}

Our second proof of this claim uses the $\MM_1(f)$ formulation which yields an arguably simpler proof.

\begin{lemma}
For all (possibly partial) functions $f$, we have
$\Adv_1(f)\le\sqrt{\s_0(f)\s_1(f)}$.
\end{lemma}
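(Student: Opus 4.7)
The plan is to use the equivalence $\Adv_1(f) = \MM_1(f)$ established in \Cref{thm:equivalent} and exhibit an explicit feasible weight scheme $w\colon\Dom(f)\times[n]\to[0,\infty)$ whose maximum row sum is at most $\sqrt{\s_0(f)\s_1(f)}$. The natural candidate is to assign one constant to 0-inputs and another to 1-inputs, supported only on sensitive bits: let $\alpha,\beta>0$ be parameters and set
\begin{equation}
w(x,i) = \begin{cases} \alpha & \text{if } f(x)=0 \text{ and bit } i \text{ is sensitive for } x,\\ \beta & \text{if } f(x)=1 \text{ and bit } i \text{ is sensitive for } x,\\ 0 & \text{otherwise.}\end{cases}
\end{equation}

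Next I would check feasibility. If $x,y\in\Dom(f)$ have $f(x)\ne f(y)$ and Hamming distance $1$, then the bit $i$ on which they differ is sensitive for both $x$ and $y$ (one of which is a 0-input and the other a 1-input), so $w(x,i)w(y,i)=\alpha\beta$. Imposing $\alpha\beta\ge 1$ makes $w$ feasible. For the objective, any $x$ with $f(x)=0$ contributes $\sum_i w(x,i) = \alpha\cdot\s_x(f) \le \alpha\,\s_0(f)$, and analogously any $x$ with $f(x)=1$ contributes at most $\beta\,\s_1(f)$.

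Finally I would optimize the two free parameters: minimize $\max(\alpha\,\s_0(f),\beta\,\s_1(f))$ subject to $\alpha\beta\ge 1$. Choosing $\alpha=\sqrt{\s_1(f)/\s_0(f)}$ and $\beta=\sqrt{\s_0(f)/\s_1(f)}$ balances the two quantities at $\sqrt{\s_0(f)\s_1(f)}$ while saturating $\alpha\beta=1$. This yields $\MM_1(f)\le\sqrt{\s_0(f)\s_1(f)}$, and applying $\Adv_1(f)=\MM_1(f)$ gives the lemma. There is essentially no obstacle here: the proof is a one-line construction plus optimization once the minimax formulation is in hand; the only subtlety is noticing that the feasibility constraint only touches pairs at Hamming distance $1$ with differing $f$-values, so restricting the support to sensitive bits is harmless and the constant-per-side ansatz suffices.
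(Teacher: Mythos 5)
Your proof is correct and uses essentially the same approach as the paper: both pass to the $\MM_1(f)$ formulation and exhibit the constant-per-side weight scheme $w(x,i)=\sqrt{\s_1(f)/\s_0(f)}$ on $0$-inputs and $\sqrt{\s_0(f)/\s_1(f)}$ on $1$-inputs. If anything you are slightly more careful than the paper, which leaves the restriction of the support to sensitive bits implicit even though it is needed for the row-sum bound.
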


\begin{proof}
Using the $\MM_1(f)$ version of $\Adv_1(f)$,
set $w(x,i)=\sqrt{\s_0(f)}/\sqrt{\s_1(f)}$ if $f(x)=1$,
and set $w(x,i)=\sqrt{\s_1(f)}/\sqrt{\s_0(f)}$ if $f(x)=0$.
Then if $x$ and $y$ differ in a single bit $i$,
we clearly have $w(x,i)w(y,i)=1$. On the other hand,
$\sum_i w(x,i)\le \s_1(f)\cdot \sqrt{\s_0(f)}/\sqrt{\s_1(f)}
=\sqrt{\s_0(f)\s_1(f)}$
for $1$-inputs $x$, and analogously
$\sum_i w(y,i)\le \sqrt{\s_0(f)\s_1(f)}$
for $0$-inputs $y$.
\end{proof}

Using this better bound on $\lambda(f)$ and Huang's result, we also get that for all total Boolean functions $f$, 
\begin{equation}
    \deg(f) \leq \s_0(f) \s_1(f).
\end{equation}
This result was also recently observed by Laplante, Naserasr, and Sunny~\cite{LNS20}.  Unlike their proof, the following uses Huang's theorem in a completely black-box way.

\begin{proposition}
Assume that $\deg(f) \leq \s(f)^2$ for all total Boolean functions $f$. Then we also have $\deg(f) \leq \s_0(f) \s_1(f)$.
\end{proposition}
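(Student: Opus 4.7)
The plan is to construct, from $f$, an auxiliary total Boolean function $g$ whose $0$- and $1$-sensitivities are both balanced at the common value $\s_0(f)\s_1(f)$, so that applying the assumed bound $\deg(g)\le \s(g)^2$ yields the target inequality after dividing through by a common factor. The natural way to achieve this balance is to compose $f$, on disjoint input blocks, with an outer \textsc{and} of inner \textsc{or}'s: the outer \textsc{and} scales up the $1$-sensitivity and the inner \textsc{or} scales up the $0$-sensitivity, so choosing the fan-ins to equal $\s_0(f)$ and $\s_1(f)$ respectively will make these two effects meet at $\s_0(f)\s_1(f)$.

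Concretely, I would first dispose of the trivial case in which $f$ is constant (so $\s_0(f)\s_1(f)=0$ and also $\deg(f)=0$). Otherwise, set $a=\s_0(f)\ge 1$ and $b=\s_1(f)\ge 1$, introduce $ab$ disjoint copies $\{x^{(i,j)}\}_{i\in[a],\,j\in[b]}$ of the input, and define
\[
g\bigl(\{x^{(i,j)}\}\bigr) \;=\; \bigwedge_{i=1}^{a}\bigvee_{j=1}^{b} f\bigl(x^{(i,j)}\bigr).
\]
Because the $ab$ input blocks are disjoint, the multilinear polynomial for $g$ is the product of the $a$ factors $1-\prod_{j}(1-f(x^{(i,j)}))$, each of degree $b\deg(f)$, and the top-degree monomial $\prod_{i,j} f(x^{(i,j)})$ survives without cancellation; hence $\deg(g)=ab\,\deg(f)$.

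Next I would show $\s(g)\le ab$ by analyzing $0$- and $1$-inputs of $g$ separately. At a $1$-input of $g$, every row-\textsc{or} equals $1$, and a flip inside row $i$ can only move $g$ off $1$ when that row has a unique $j$ with $f(x^{(i,j)})=1$ and the flip toggles that unique copy (a $1$-input of $f$); this contributes at most $\s_1(f)=b$ sensitive bits per row, giving $\s_1(g)\le a\cdot \s_1(f)=ab$. At a $0$-input of $g$, the flip can change $g$ only when exactly one row-\textsc{or} currently equals $0$ and the flip lies inside that row; in that row all $b$ copies of $f$ are $0$-inputs of $f$, so the total number of sensitive bits is at most $b\cdot \s_0(f)=ab$. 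Hence $\s(g)\le ab$.

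Applying the assumed inequality $\deg(g)\le \s(g)^2$ now gives $ab\,\deg(f)\le (ab)^2$, i.e.\ $\deg(f)\le ab=\s_0(f)\,\s_1(f)$, which is the claim. The main delicate step in this plan is the $0$-input sensitivity calculation, since one has to argue that (i) no bit can be sensitive unless the row containing it is the unique row whose \textsc{or}-value is $0$, and (ii) within that row only bits sensitive for $f$ at a $0$-input of $f$ contribute; everything else is routine composition combinatorics, and no quantitative loss is incurred precisely because both $a=\s_0(f)$ and $b=\s_1(f)$ are integers.
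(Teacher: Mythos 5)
Your proof is correct and is essentially the same as the paper's: both compose $f$ with $\AND_{\s_0(f)}\circ\OR_{\s_1(f)}$ on disjoint blocks to balance the two one-sided sensitivities at $\s_0(f)\s_1(f)$, then invoke multiplicativity of degree under composition and the assumed bound $\deg\leq\s^2$. The only cosmetic differences are that you verify the sensitivity bound by a direct case analysis on $0$- and $1$-inputs (the paper streamlines this by observing that the outer function is monotone, so sensitive bits at a $1$-input are $1$s and at a $0$-input are $0$s), and you argue degree multiplicativity via the explicit product structure rather than citing it.
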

\begin{proof}
Let $\s_0(f) = k$ and $\s_1(f) = \ell$. We know that $\deg(f) \leq \max\{k,\ell\}$ by assumption. 
Let $\AND_{k} \circ \OR_{\ell}$ be the AND function on $k$ bits composed with the OR function on $\ell$ bits. Clearly $\s_0(\AND_{k} \circ \OR_{\ell})=\ell$ and $\s_1(\AND_{k} \circ \OR_{\ell}) = k$. 
Furthermore, because the function is monotone, the sensitive bits for a $0$-input are bits set to $0$, and the sensitive bits for a $1$-input are bits set to $1$. 
This means that composing this function with $f$ with yield a function where the one-sided sensitivity will be upper bounded by the product of one-sided sensitivity of the individual functions. Hence for all $b \in \B$, we have
\begin{equation}
    \s_b(\AND_{k} \circ \OR_{\ell} \circ f) \leq \s_b(\AND_{k} \circ \OR_{\ell}) \s_b(f) \leq k\ell.
\end{equation}
Using the assumption on the function $\AND_{k} \circ \OR_{\ell} \circ f$, we get 
\begin{equation}
    \deg(\AND_{k} \circ \OR_{\ell} \circ f) \leq (\s(\AND_{k} \circ \OR_{\ell} \circ f))^2 \leq (k\ell)^2.
\end{equation}
Finally, it is well known that $\deg(f \circ g) = \deg(f)\deg(g)$~(see, e.g., \cite{Tal13}), and hence $\deg(\AND_{k} \circ \OR_{\ell} \circ f) = k\ell \deg(f)$, which implies $\deg(f) \leq k\ell$.
\end{proof}

\subsection{Lower bounds}

Finally, we prove some lower bounds on $\lambda(f)$.

\begin{lemma}\label{lem:lambdalower}
For all (possibly partial) functions $f$, $\s(f)\leq \lambda(f)^2$.
\end{lemma}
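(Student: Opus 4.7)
The plan is to observe that $\s(f)$ is precisely the maximum degree of the sensitivity graph $G_f$, and to use the standard fact that the spectral norm of an adjacency matrix is at least the square root of its maximum degree. So the proof reduces to a one-line linear algebra argument once the identification $\s(f) = \Delta(G_f)$ is noted.

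Concretely, let $x \in \Dom(f)$ attain $\s_x(f) = \s(f)$. By definition of the sensitivity graph, the neighbors of $x$ in $G_f$ are exactly the inputs $x \oplus e_i$ for sensitive bits $i$; thus the row of $A_f$ indexed by $x$ contains exactly $\s(f)$ ones. Let $e_x$ denote the standard basis vector indicating $x$. Then $A_f e_x$ is the indicator of the neighborhood of $x$, and hence $\|A_f e_x\|_2^2 = \s(f)$ while $\|e_x\|_2 = 1$. By the variational characterization of the spectral norm,
\begin{equation}
\lambda(f) = \|A_f\| \;\geq\; \frac{\|A_f e_x\|_2}{\|e_x\|_2} \;=\; \sqrt{\s(f)},
\end{equation}
which upon squaring gives the claimed bound $\s(f) \leq \lambda(f)^2$.

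There is no real obstacle here: the only thing to double-check is that the statement is intended for possibly partial $f$, in which case $G_f$ is simply viewed as the graph on $\Dom(f)$ with edges between Hamming-distance-one inputs of differing $f$-value, and the same argument goes through verbatim, since the rows of $A_f$ indexed by $\Dom(f)$ still count sensitive neighbors (which lie in $\Dom(f)$ by definition of $\s_x(f)$ for partial functions). One might also note that this bound, combined with \Cref{thm:Huang}, immediately recovers Huang's $\deg(f) \leq \s(f)^2$ as the composition of two strictly stronger inequalities.
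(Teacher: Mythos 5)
Your proof is correct and takes essentially the same route as the paper: both isolate the star centered at a maximally sensitive input and use it to force $\|A_f\| \geq \sqrt{\s(f)}$. The paper phrases this via the spectral norm of a star graph together with the submatrix principle, whereas you carry out the equivalent computation directly with the test vector $e_x$ and the variational characterization of the spectral norm; the underlying content is the same.
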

\begin{proof}
Consider any input $x$ with sensitivity $\s(f)$. This means $x$ has $\s(f)$ neighbors on the hypercube with different $f$ value. The sensitivity graph restricted to these $\s(f)+1$ inputs is a star graph centered at $x$. The spectral norm of the adjacency matrix of the star graph on $k+1$ vertices is $\sqrt{k}$. Since the spectral norm of $A_f$ is lower bounded by that of a submatrix, we have $\lambda(f) \geq \sqrt{\s(f)}$.
\end{proof}

This relationship is tight for the $\OR_n$ function which has $\s(\OR_n)=n$ and $\lambda(\OR_n)=\sqrt{n}$. Although $\OR_n$ has unbalanced sensitivities, with $\s_0(\OR_n)=n$ and $\s_1(\OR_n)=1$, there are functions $f$ with $\s(f)=\s_0(f)=\s_1(f)=n$ and $\lambda(f)=\sqrt{n}$. One example of such a function is $x_1 \oplus \OR(x_2,\ldots,x_n)$. Another example of such a function with a quadratic gap between $\s(f)$ and $\lambda(f)$ is the function that is $1$ if and only if the input string has Hamming weight $1$. This function has $\s_0(f)=n$ since the all zeros string is fully sensitive and $\s_1(f)=n$ since every Hamming weight $1$ string is also fully sensitive. But we know that this problem can be solved by Grover's algorithm with $O(\sqrt{n})$ queries, and hence $\lambda(f) = O(\Q(f)) = O(\sqrt{n})$.

We can also lower bound $\norm{A_f}$ using the relationship between spectral norm and Frobenius norm. We have for all $N\times N$ matrices $A$ that $\norm{A} \geq \frac{1}{\sqrt{N}} \norm{A}_F$~\cite[Eq. (2.3.7)]{GV13}, where $\norm{A}^2_F = \sum_{i,j}|A_{ij}|^2$. 
For the sensitivity graph of $f$, $\frac{1}{\sqrt{N}} \norm{A_f}_F$ is just the average sensitivity.  
\begin{lemma}
For all (possibly partial) functions $f$, $\lambda(f) \geq \E_{x}[\s_x(f)]$.
\end{lemma}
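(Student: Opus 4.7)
The plan is to apply Rayleigh's principle to the real symmetric matrix $A_f$ with the all-ones test vector on $\Dom(f)$, noting that on this vector the Rayleigh quotient evaluates exactly to the average sensitivity. Since $A_f$ is symmetric and $G_f$ is bipartite, the spectral norm $\lambda(f) = \norm{A_f}$ equals the top eigenvalue of $A_f$, and thus is bounded below by $v^T A_f v / v^T v$ for any nonzero $v$. Picking $v$ to be the all-ones vector is the natural choice because $A_f \mathbf{1}$ has a transparent combinatorial meaning in terms of sensitive-neighbor counts.

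Carrying this out: let $N = |\Dom(f)|$, let $\mathbf{1}$ denote the all-ones vector on $\Dom(f)$, and observe that $\mathbf{1}^T \mathbf{1} = N$. For the numerator I would write
\begin{equation}
\mathbf{1}^T A_f \mathbf{1} \;=\; \sum_{x,y \in \Dom(f)} A_f[x,y] \;=\; 2\,|E(G_f)|,
\end{equation}
since $A_f$ is a $\{0,1\}$-matrix whose nonzero entries are indexed by ordered pairs of sensitive neighbors. By the handshake lemma, $2|E(G_f)| = \sum_{x} \s_x(f)$, because each sensitive edge $\{x, x\oplus e_i\}$ contributes $1$ to $\s_x(f)$ and $1$ to $\s_{x\oplus e_i}(f)$. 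Combining these two facts with Rayleigh's inequality gives
\begin{equation}
\lambda(f) \;\ge\; \frac{\mathbf{1}^T A_f \mathbf{1}}{\mathbf{1}^T \mathbf{1}} \;=\; \frac{1}{N}\sum_{x} \s_x(f) \;=\; \E_x[\s_x(f)],
\end{equation}
which is exactly the claim.

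There is essentially no technical obstacle here; the argument is a one-line application of Rayleigh's principle, and the only thing to verify is the (routine) handshake identity. I note that the paragraph preceding the lemma motivates a Frobenius-norm inequality $\lambda(f) \ge \norm{A_f}_F/\sqrt{N}$, but that bound actually evaluates to $\sqrt{\E_x[\s_x(f)]}$ rather than $\E_x[\s_x(f)]$; in particular, the two bounds are incomparable (Frobenius is tighter when the average sensitivity is below $1$ and Rayleigh is tighter when it exceeds $1$). It is the Rayleigh-quotient argument with the all-ones test vector that delivers the bound precisely as stated.
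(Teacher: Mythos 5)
Your Rayleigh-quotient argument is correct and proves the lemma exactly as stated: with $\mathbf{1}$ the all-ones vector on $\Dom(f)$, one has $\mathbf{1}^T A_f \mathbf{1}=\sum_x \s_x(f)$ and $\mathbf{1}^T\mathbf{1}=N$, so $\lambda(f)\ge\E_x[\s_x(f)]$. This is a genuinely different route from the one the paper sketches, which invokes the Frobenius-norm inequality $\norm{A}\ge\norm{A}_F/\sqrt{N}$ and asserts the right-hand side equals the average sensitivity. As you correctly observe, that assertion is off by a square root: since $A_f$ is a $\{0,1\}$ matrix, $\norm{A_f}_F^2=\sum_{x,y}A_f[x,y]=\sum_x\s_x(f)=N\cdot\E_x[\s_x(f)]$, hence $\norm{A_f}_F/\sqrt{N}=\sqrt{\E_x[\s_x(f)]}$, not $\E_x[\s_x(f)]$. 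The two bounds are indeed incomparable (Frobenius wins below average sensitivity $1$, Rayleigh wins above), and it is the Rayleigh version that is tight, e.g., for $\Parity_n$ where $\lambda=\E_x[\s_x]=n$. So your proof both supplies the argument that actually delivers the stated inequality and flags a slip in the paper's one-line justification.
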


This can be improved by only taking the expectation on the right over a subset of the inputs of $f$, which then equals another complexity measure originally defined by Khrapchenko~\cite{Khr71}. See \cite{Kou93} for more details. 
\end{document}